\RequirePackage{amsmath}

\let\oldvec\vec
\documentclass{llncs}
\let\vec\oldvec

\pagestyle{plain} 

\usepackage{amssymb}
\usepackage{bitset}
\usepackage{multicol}
\usepackage{tikz}
\usetikzlibrary{arrows.meta}
\usepackage{subfig}



\usepackage{commands-tam}
\usepackage{graphicx}
\usepackage{url}
\usepackage{wrapfig}
\usepackage{cite}

\usepackage[textsize=scriptsize]{todonotes} 




\newcommand{\calS}{\mathcal{S}}

\newcommand{\calG}{\mathcal{G}}

\newcommand{\calW}{\mathcal{W}}
\newcommand\calU{\mathcal{U}}

\newcommand{\planter}{\texttt{planter}}
\newcommand{\iteration}{\texttt{iteration}}
\newcommand{\decrementer}{\texttt{decremenber}}


\newcommand{\geometrictile}[8]{
  \begin{tikzpicture}[x=#1, y=#1]
    \pgfmathsetmacro{\VA}{1 / (#2 + 2)}
    \pgfmathsetmacro{\VB}{1 - \VA}
    \bitsetSetBin{bsA}{#5}
    \bitsetSetBin{bsB}{#6}
    \bitsetSetBin{bsC}{#7}
    \bitsetSetBin{bsD}{#8}
    \filldraw[fill=#3] (\VA,\VA) -- (\VA,\VB) -- (\VB,\VB) -- (\VB,\VA) -- (\VA,\VA);
    \foreach \x in {1,...,#2}
    {
      \ifthenelse{\bitsetGet{bsC}{\x-1}=1}{
        \filldraw[fill=#4] (\VA*\x,0) -- (\VA*\x,\VA) -- (\VA*\x+\VA,\VA) -- (\VA*\x+\VA,0) -- (\VA*\x,0);}{}
      \ifthenelse{\bitsetGet{bsA}{\x-1}=1}{
        \filldraw[fill=#4] (\VA*\x,\VB) -- (\VA*\x,1) -- (\VA*\x+\VA,1) -- (\VA*\x+\VA,\VB) -- (\VA*\x,\VB);}{}
      \ifthenelse{\bitsetGet{bsD}{\x-1}=1}{
        \filldraw[fill=#4] (0,\VA*\x) -- (0,\VA*\x+\VA) -- (\VA,\VA*\x+\VA) -- (\VA,\VA*\x) -- (0,\VA*\x);}{}
      \ifthenelse{\bitsetGet{bsB}{\x-1}=1}{
        \filldraw[fill=#4] (\VB,\VA*\x) -- (\VB,\VA*\x+\VA) -- (1,\VA*\x+\VA) -- (1,\VA*\x) -- (\VB,\VA*\x);}{}
    }
  \end{tikzpicture}
}

\newcommand{\fourDomGeom}[9]{
    \bitsetSetBin{bsA}{#3}
    \bitsetSetBin{bsB}{#5}
    \bitsetSetBin{bsC}{#7}
    \bitsetSetBin{bsD}{#9}
    
    \foreach \x in {1,...,#1}
    {
        \ifthenelse{\bitsetGet{bsA}{#1-\x}=1}
        {\filldraw[fill=orange!50!white] (\x-1,2) -- (\x-1,3) -- (\x,3) -- (\x,2) -- (\x-1,2);}
        {\draw[dashed] (\x-1,2) -- (\x-1,3) -- (\x,3) -- (\x,2) -- (\x-1,2);}
    }
    \foreach \x in {1,...,#1}
    {
        \ifthenelse{\bitsetGet{bsB}{#1-\x}=1}
        {\filldraw[fill=orange!50!white] (\x+#1-1,2) -- (\x+#1-1,3) -- (\x+#1,3) -- (\x+#1,2) -- (\x+#1-1,2);}
        {\draw[dashed] (\x+#1-1,2) -- (\x+#1-1,3) -- (\x+#1,3) -- (\x+#1,2) -- (\x+#1-1,2);}
    }
    \foreach \x in {1,...,#1}
    {
        \ifthenelse{\bitsetGet{bsC}{#1-\x}=1}
        {\filldraw[fill=orange!50!white] (\x+2*#1-1,2) -- (\x+2*#1-1,3) -- (\x+2*#1,3) -- (\x+2*#1,2) -- (\x+2*#1-1,2);}
        {\draw[dashed] (\x+2*#1-1,2) -- (\x+2*#1-1,3) -- (\x+2*#1,3) -- (\x+2*#1,2) -- (\x+2*#1-1,2);}
    }
    \foreach \x in {1,...,#1}
    {
        \ifthenelse{\bitsetGet{bsD}{#1-\x}=1}
        {\filldraw[fill=orange!50!white] (\x+3*#1-1,2) -- (\x+3*#1-1,3) -- (\x+3*#1,3) -- (\x+3*#1,2) -- (\x+3*#1-1,2);}
        {\draw[dashed] (\x+3*#1-1,2) -- (\x+3*#1-1,3) -- (\x+3*#1,3) -- (\x+3*#1,2) -- (\x+3*#1-1,2);}
    }
	
	\draw [draw=red, line width=0.2mm] (0,3.2) -- (0,3.5) -- (#1-0.05,3.5) -- (#1-0.05,3.2);
	\draw [draw=blue, line width=0.2mm] (#1+0.05,3.2) -- (#1+0.05,3.5) -- (2*#1-0.05,3.5) -- (2*#1-0.05,3.2);
	\draw [draw=green!50!black, line width=0.2mm] (2*#1+0.05,3.2) -- (2*#1+0.05,3.5) -- (3*#1-0.05,3.5) -- (3*#1-0.05,3.2);
	\draw [draw=magenta, line width=0.2mm] (3*#1+0.05,3.2) -- (3*#1+0.05,3.5) -- (4*#1,3.5) -- (4*#1,3.2);
	
	\node at (0.5*#1,4) {#2};
	\node at (1.5*#1,4) {#4};
	\node at (2.5*#1,4) {#6};
	\node at (3.5*#1,4) {#8};
	
	\fill[fill=orange!50!white] (0,0) -- (0,2) -- (4*#1,2) -- (4*#1,0) -- (0,0);
	\draw (0,0) -- (0,2) -- (4*#1,2) -- (4*#1,0);
}

\let\OLDthebibliography\thebibliography
\renewcommand\thebibliography[1]{
  \OLDthebibliography{#1}
  \setlength{\parskip}{0pt}
  \setlength{\itemsep}{0pt plus 0.3ex}
}

\title{Geometric Tiles and Powers and Limitations of Geometric Hindrance in Self-Assembly}

\author{
 Daniel Hader \thanks{Department of Computer Science and Computer Engineering, University of Arkansas, Fayetteville, AR, USA. This author's research was supported in part by National Science Foundation Grants CCF-1422152 and CAREER-1553166. \protect\url{dhader@email.uark.edu}}
\and
 Matthew J. Patitz \thanks{Department of Computer Science and Computer Engineering, University of Arkansas, Fayetteville, AR, USA. This author's research was supported in part by National Science Foundation Grants CCF-1422152 and CAREER-1553166. \protect\url{patitz@uark.edu}}
}

\date{}

\institute{}

\begin{document}

\maketitle

\begin{abstract}
Tile-based self-assembly systems are capable of universal computation and algorithmically-directed growth. Systems capable of such behavior typically make use of ``glue cooperation'' in which the glues on at least $2$ sides of a tile must match and bind to those exposed on the perimeter of an assembly for that tile to attach. However, several models have been developed which utilize ``weak cooperation'', where only a single glue needs to bind but other preventative forces (such as geometric, or steric, hindrance) provide additional selection for which tiles may attach, and where this allows for algorithmic behavior. In this paper we first work in a model where tiles are allowed to have geometric bumps and dents on their edges. We show how such tiles can simulate systems of square tiles with complex glue functions (using asymptotically optimal sizes of bumps and dents), and also how they can simulate weakly cooperative systems in a model which allows for duples (i.e. tiles either twice as long or twice as tall as square tiles). We then show that with only weak cooperation via geometric hindrance, no system in any model can simulate even a class of tightly constrained, deterministic cooperative systems, further defining the boundary of what is possible using this tool.
\end{abstract}

\section{Introduction}

Systems of tile-based self-assembly in models such as the abstract Tile Assembly Model (aTAM)\cite{Winf98} have been shown to be very powerful in the sense that they are computationally universal\cite{Winf98} and are also able to algorithmically build complex structures very efficiently\cite{RotWin00,SolWin07}. The key to their computational and algorithmic power arises from the ability of tiles to convey information via the glues that they use to bind to growing assemblies and the preferential binding of some types of tiles over others based upon the requirement that they simultaneously match the glues of multiple tiles already in an assembly. This is called (glue) cooperation, and in physical implementations it can require a difficult balance of conditions to enforce. It is conjectured that systems which do not utilize cooperation, that is, those in which tiles can bind to a growing assembly by matching only a single glue, do not have the power to perform computations or algorithmically guided growth\cite{jLSAT1,WoodsMeunierSTOC,IUNeedsCoop}. However, several past results have shown that a middle ground, which we call weak cooperation, can be used to design systems which are capable of at least some of the power of cooperative systems. It has been shown that using geometric hindrance \cite{GeoTiles,jDuples,Polygons,Polyominoes} or repulsive glue forces \cite{SingleNegative,jNegativeGluesShapes,jBreakableDuples}, systems with a binding threshold of $1$ (a.k.a. temperature-1 systems) are capable of universal computation. This is because they are able to simulate temperature-2 \emph{zig-zag} aTAM systems, which are in many ways the most restrictive and deterministic of aTAM systems, but which are still capable of simulating arbitrary Turing machines.

In this paper, we further explore some of the powers and limitations of self-assembly in weakly-cooperative systems. First, we investigate the abilities of so-called geometric tiles (those with bumps and dents on their edges), which were shown in \cite{GeoTiles} to be able to self-assemble $n\times n$ squares using only temperature-1 and $\Theta(\sqrt{\log{n}})$ unique tile types (beating the lower bound of $\log{n}/\log{\log{n}}$ required for square aTAM tiles), and also at temperature-1 to be able to simulate temperature-2 zig-zag systems in the aTAM, and thus arbitrary Turing machines. Here we prove their ability to simulate non-cooperative, temperature-1, aTAM systems that have complex glue functions which allow glue types to bind to arbitrary sets of other glue types. We provide a construction and then show that it uses the minimum possible number of unique glue types (which is 2), and that it is asymptotically optimal with respect to the size of the geometries used. Next we show that another set of systems that are weakly-cooperative and use \emph{duples} (i.e. $2 \times 1$ sized tiles)\cite{jDuples} along with square tiles at temperature-1 can be simulated by geometric tiles.

Our final contribution is to expose a fundamental limitation of weakly cooperative self-assembling systems which rely on geometric hindrance. As previously mentioned, they are able to simulate the behaviors of temperature-2 aTAM systems which are called zig-zag systems. These systems have the properties that at all points during assembly, there is exactly one frontier location where the next tile can attach (or zero once assembly completes), and for every tile type, every tile of that type which attaches does so by using the exact same input sides (i.e. the initial sides with which it binds), and also has the same subset of sides used as output sides (i.e. sides to which later tiles attach).\footnote{Note that \emph{rectilinear} systems can also be simulated, and they have similar properties except that they may have multiple frontier locations available.} It has previously been shown in \cite{jDuples} that there exist temperature-2 aTAM systems which cannot be simulated by temperature-1 systems with duples, but that proof fundamentally utilizes a nondeterministic, undirected aTAM system with an infinite number of unique terminal assemblies. Here we try to find a ``tighter'' gap and so we explore aTAM temperature-2 systems which are directed and only ever have a single frontier location and whose tile types always have fixed input sides, but we make the slight change of allowing for tiles of the same type to sometimes use different output sides. We prove that with this minimal addition of uncertainty, no system which utilizes weak cooperation with geometric hindrance can simulate such a system, at any scale factor. Thus, while geometric hindrance is an effective tool for allowing the simulation of general computation, the dynamics which such weakly-cooperative systems can capture is severely restricted.

\section{Preliminaries}

Here we provide brief descriptions of models used in this paper. References are provided for more thorough definitions. 

\subsection{Informal description of the abstract Tile Assembly Model}
\label{sec:tam-informal}

This section gives a brief informal sketch of the abstract Tile Assembly Model (aTAM) \cite{Winf98} and uses notation from \cite{RotWin00} and \cite{jSSADST}. For more formal definitions and additional notation, see \cite{RotWin00} and \cite{jSSADST}.

A \emph{tile type} is a unit square with four sides, each consisting of a \emph{glue label} which is often represented as a finite string.
There is a finite set $T$ of tile types, but an infinite number of copies of each tile type, with each copy being referred to as a \emph{tile}.
A \emph{glue function} is a symmetric mapping from pairs of glue labels to a non-negative integer value which represents the strength of binding between those glues.
An \emph{assembly}
is a positioning of tiles on the integer lattice $\Z^2$, described  formally as a partial function $\alpha:\Z^2 \dashrightarrow T$. 
Let $\mathcal{A}^T$ denote the set of all assemblies of tiles from $T$, and let $\mathcal{A}^T_{< \infty}$ denote the set of finite assemblies of tiles from $T$.
We write $\alpha \sqsubseteq \beta$ to denote that $\alpha$ is a \emph{subassembly} of $\beta$, which means that $\dom\alpha \subseteq \dom\beta$ and $\alpha(p)=\beta(p)$ for all points $p\in\dom\alpha$.
Two adjacent tiles in an assembly \emph{interact}, or are \emph{attached}, if the glue labels on their abutting sides have positive strength between them according to the glue function. 
Each assembly induces a \emph{binding graph}, a grid graph whose vertices are tiles, with an edge between two tiles if they interact.
The assembly is \emph{$\tau$-stable} if every cut of its binding graph has strength at least~$\tau$, where the strength   of a cut is the sum of all of the individual glue strengths in the cut.


A \emph{tile assembly system} (TAS) is a 4-tuple $\calT = (T,\sigma,G,\tau)$, where $T$ is a finite set of tile types, $\sigma:\Z^2 \dashrightarrow T$ is a finite, $\tau$-stable \emph{seed assembly},
$G$ is a \emph{glue function},
and $\tau$ is the \emph{temperature}.
In the case that the glue function $G$ is \emph{diagonal}, meaning that each glue only has a non-zero strength with itself, $G$ is often omitted from the definition and a TAS is defined as the triple $\calT = (T, \sigma, \tau)$ where the strengths between identical glues are given as part of $T$. Glue functions which are not diagonal are often said to define \emph{flexible} glues.
Given an assembly $\alpha$, the \emph{frontier}, $\frontiert{\alpha}$, is the set of locations to which tiles can $\tau$-stably attach.
An assembly $\alpha$ is \emph{producible} if either $\alpha = \sigma$ or if $\beta$ is a producible assembly and $\alpha$ can be obtained from $\beta$ by the stable binding of a single tile to a location in $\frontiert{\beta}$.
In this case we write $\beta\to_1^\calT \alpha$ (to mean~$\alpha$ is producible from $\beta$ by the attachment of one tile), and we write $\beta\to^\calT \alpha$ if $\beta \to_1^{\calT*} \alpha$ (to mean $\alpha$ is producible from $\beta$ by the attachment of zero or more tiles).
We let $\prodasm{\calT}$ denote the set of producible assemblies of $\calT$.
An assembly $\alpha$ is \emph{terminal} if no tile can be $\tau$-stably attached to it, i.e. $|\frontiert{\alpha}| = 0$.
We let   $\termasm{\calT} \subseteq \prodasm{\calT}$ denote  the set of producible, terminal assemblies of $\calT$.
A TAS $\calT$ is \emph{directed} if $|\termasm{\calT}| = 1$. Hence, although a directed system may be nondeterministic in terms of the order of tile placements,  it is deterministic in the sense that exactly one terminal assembly. We say that a system $\calT$ is a \emph{single-assembly-sequence} system (SASS), if for every producible assembly $\alpha \in \prodasm{\calT}$, $|\frontiert{\alpha}| \le 1$, i.e. there is never more than one location to which a new tile can bind. If a system $\calT$ is a SASS and also directed, then it is fully deterministic.
We say that a system $\calT$ is a \emph{zig-zag system} if it is a SASS where, for every producible assembly $\alpha \in \prodasm{\calT}$ and $\beta \sqsubseteq \alpha$, the $y$ coordinate of $\frontiert{\alpha}$ is never smaller than the $y$ coordinate of $\frontiert{\beta}$.

\subsection{Informal description of the Geometric Tile Assembly Model}

\begin{figure}
    \label{fig:compatible-tiles}
    \begin{center}
        \geometrictile{2cm}{5}{white!60!cyan}{white!60!blue}{10110}{00101}{10011}{10011}
        \begin{tikzpicture}[x=2cm, y=1cm, baseline={([yshift=-6ex]current bounding box.center)}]
            \coordinate (p1) at (0,.5);
            \coordinate (p2) at (1,.5);
            \draw[draw=green!80!black, line width=2pt,{Latex[length=2mm]}-{Latex[length=2mm]}] (p1) -- node[above]{compatible} (p2);
        \end{tikzpicture}
        \geometrictile{2cm}{5}{white!60!cyan}{white!60!blue}{10110}{1111}{10011}{11010}
    \end{center}
    
    \begin{center}
        \geometrictile{2cm}{5}{white!60!cyan}{white!60!blue}{10110}{10111}{10011}{10011}
        \begin{tikzpicture}[x=2cm, y=1cm, baseline={([yshift=-8ex]current bounding box.center)}, node distance=0.8cm]
            \coordinate (p1) at (0,.5);
            \coordinate (p2) at (1,.5);
            \draw[draw=red, line width=2pt,{Latex[length=2mm]}-{Latex[length=2mm]}] (p1) -- node[above](A){compatible} node[above of=A]{not}  (p2);
        \end{tikzpicture}
        \geometrictile{2cm}{5}{white!60!cyan}{white!60!blue}{10110}{1111}{10011}{11010}
    \end{center}
    \caption{Examples of compatible and incompatible geometric tiles}
    \label{fig:geomTiles}
\end{figure}
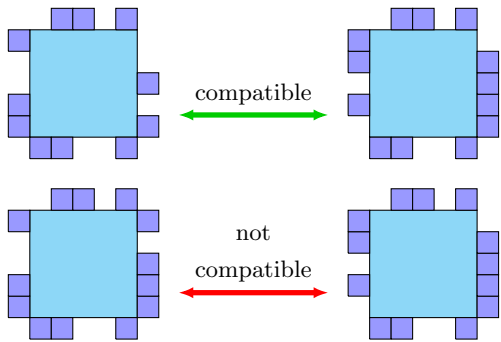

The geometric tile assembly model (GTAM) is similar to the aTAM with the addition of geometric bumps along the sides of tiles and the restriction that the glue function be diagonal. This section will provide an informal introduction to the model, but a more complete introduction can be found in \cite{GeoTiles}. The introduction presented here differs slightly from that in \cite{GeoTiles} in that we focus only on 1-dimensional geometry and try to match the notation as closely as possible to the aTAM definition in the previous section.

A \emph{geometry} of size $n$ is a mapping from $\{1,\ldots,n\}$ to $\{0,1\}$. This represents $n$ possible locations for bumps with a $1$ representing a bump at that location and a $0$ representing no bump. A \emph{geometric tile type} is a unit square with a glue label and a geometry on each side. Similarly to tiles in the aTAM, two geometric tiles \emph{interact} if the glue labels on their abutting sides have a positive strength; however, if the two abutting geometries have bumps in corresponding locations, they are called \emph{incompatible} and cannot bind regardless of glue strength. It's important to note that opposite sides of a tile can posses the same geometry. Geometry is rotated along the sides of the tile, so in this case, the geometry on the opposite side would be reversed. Thus any geometry which has a pair of bumps symmetric about its middle would be incompatible with itself. A \emph{geometric tile assembly system} (GTAS) is a triple $\calT = (T,\sigma,\tau)$, where $T$ is a finite set of geometric tile types, $\sigma:\Z^2 \dashrightarrow T$ is a finite, $\tau$-stable \emph{seed assembly}, and $\tau$ is the \emph{temperature}. Since the glue function for a GTAM system is diagonal, it is omitted from the definition for convenience. Also note that the size of the geometries in any GTAS is fixed.

\subsection{Additional Models}

A wide variety of models which generalize and extend certain aspects of the aTAM have been developed. Of those, due to space constraints we will briefly mention a few and cite references which can be used to find full definitions.

The 3D aTAM \cite{CookFuSch11} is the natural extension of the aTAM from 2-dimensional square tiles to 3-dimensional cubes. In \cite{jDuples}, the Dupled abstract Tile Assembly Model (DaTAM) is defined as an extension to the aTAM which allows for not only the standard square, $1 \times 1$, tiles, but also the inclusion of ``duples'' (or dominoes) which are tiles of dimension $2 \times 1$ or $1 \times 2$.  Allowing for more complex tile shapes, the Polyomino Tile Assembly Model (polyTAM) \cite{Polyominoes} allows for tiles composed of arbitrary numbers of unit squares which are connected along aligned faces. The polygonal TAM \cite{Polygons,OneTile} allows for tiles to have arbitrary polygonal shapes, and this is the only model mentioned which doesn't have an underlying regular lattice. Finally, another extension to the aTAM which we discuss is one which includes \emph{negative glues}, or glues which exhibit repulsive forces \cite{SingleNegative}. In systems with negative glues, two tiles may have adjacent faces with matching glues whose interaction strength is a negative integer. This is subtracted from the overall sum of binding strengths of adjacent glues to determine if the tile can attach.\footnote{Note that with negative glues, more complex dynamics, which include the breaking apart of assemblies, are possible\cite{jNegativeGluesShapes}.}

\subsection{Cooperation}
Self-assembling systems in tile assembly models contain a parameter known as the \emph{minimum binding threshold}, often called the \emph{temperature}. This parameter specifies the minimum biding strength, summed over all binding glues, that a tile must have with an assembly in order to attach to it. The binding strengths of glues are typically discrete, positive integer values. If the temperature parameter is set to $2$ or greater, we say that the system is \emph{strongly-cooperative}, uses \emph{strong cooperation}, or uses \emph{glue cooperation}, because it is possible for the attachment of a new tile to an assembly to require that it bind its glues, with positive affinity, to more than one tile already existing in the assembly. For example, in a temperature-2 system (i.e. one where the temperature parameter $= 2$), a tile may attach by binding with two glues each of strength 1, and thus two tiles already in the assembly \emph{cooperate} to allow for the new attachment. In contrast, we say that a system is \emph{non-cooperative} if its temperature parameter is set to 1 and in all situations where there is an empty location with an exposed incident glue, any tile with a matching glue can attach there. Finally, we say a system is \emph{weakly-cooperative}, or uses \emph{weak-cooperation}, if its temperature parameter is set to $1$ but it is able to make use of any form of \emph{binding hindrance}. In such a system, the binding strength of any single glue is strong enough to allow a tile to attach to an assembly; however, it is possible that some tiles are prevented from binding by another factor. Two such forms of binding hindrance we'll consider are \emph{geometric hindrance} and \emph{glue repulsion}. In this context, geometric (a.k.a. steric) hindrance occurs when a tile cannot bind in a location because at least some of the space that it would occupy is already occupied by some portion of another tile. This is relevant when tiles have more complex shapes than unit squares. See Figure~\ref{fig:geom-zigzag} for an example, as well as \cite{GeoTiles,Polygons,Polyominoes,jDuples}. Glue repulsion occurs when glues are able to experience negative strength interactions, i.e. when they can repel each other and their interaction subtracts from the total binding strength of a tile to an assembly. Examples can be found in \cite{SingleNegative,DotKarMasNegativeJournal,NegativeGluesShapes}. An example of the combination of geometric hindrance and glue repulsion can be found in \cite{jBreakableDuples}.

\begin{figure}
    \begin{center}
        \begin{tikzpicture}[x=0.4cm, y=0.4cm]
            \foreach \y in {0,...,2}{
            \foreach \x in {0,...,5}{
            \filldraw[fill=orange!50!white] (\x,\y)--(\x+1,\y)--(\x+1,\y+1)--(\x,\y+1)--(\x,\y);
            }}
            \filldraw[fill=orange!50!white] (5,3)--(6,3)--(6,4)--(5,4)--(5,3);
            \filldraw[fill=orange!50!white] (4,3)--(5,3)--(5,4)--(4,4)--(4,3);
            \draw[draw=red, line width=0.5mm, -latex] (0.5,0.5)--(5.5,0.5)--(5.5,1.5)--(0.5,1.5)--(0.5,2.5)--(5.5,2.5)--(5.5,3.5)--(4.2,3.5);
            
            \node at (0,7) {a)};
            
            \draw (7,-1)--(7,7);
            
            \fill[fill=orange!50!white] (8,0)--(8,1)--(13,1)--(13,3)--(16,3)--(16,0)--(8,0);
            \fill[fill=blue!50!white] (13,1.5)--(13.5,1.5)--(13.5,2.5)--(13,2.5)--(13,1.5);
            \fill[fill=red!50!white] (11.5,1)--(11.5,0.5)--(12.5,0.5)--(12.5,1)--(11.5,1);
            \draw (8,1)--(16,1) (9,1)--(9,0) (11,1)--(11,0) (13,3)--(13,0) (13,3)--(16,3) (15,3)--(15,0);
            \draw[dashed] (11,1)--(11,3)--(13,3);
            
            \fill[fill=orange!50!white] (8,4)--(10,4)--(10,6)--(8,6)--(8,4);
            \fill[red!50!white] (8.5,4)--(8.5,4.5)--(9.5,4.5)--(9.5,4)--(8.5,4);
            \fill[blue!50!white] (10,4.5)--(9.5,4.5)--(9.5,5.5)--(10,5.5)--(10,4.5);
            \draw (8,4)--(10,4)--(10,6)--(8,6)--(8,4);
            \draw[draw=green!60!black, line width=1mm] (8,6) circle (0.2cm);
            \fill[fill=orange!50!white] (11,4)--(13,4)--(13,6)--(11,6)--(11,4);
            \fill[white] (11.5,4)--(11.5,4.5)--(12.5,4.5)--(12.5,4)--(11.5,4);
            \fill[blue!50!white] (13,4.5)--(12.5,4.5)--(12.5,5.5)--(13,5.5)--(13,4.5);
            \draw (11,4)--(13,4)--(13,6)--(11,6)--(11,4);
            \draw[draw=red, line width=1mm] (10.7,5.7)--(11.3,6.3) (10.7,6.3)--(11.3,5.7);
            \fill[fill=orange!50!white] (14,4)--(16,4)--(16,6)--(14,6)--(14,4);
            \fill[red!50!white] (14.5,4)--(14.5,4.5)--(15.5,4.5)--(15.5,4)--(14.5,4);
            \fill[white] (16,4.5)--(15.5,4.5)--(15.5,5.5)--(16,5.5)--(16,4.5);
            \draw (14,4)--(16,4)--(16,6)--(14,6)--(14,4);
            \draw[draw=red, line width=1mm] (13.7,5.7)--(14.3,6.3) (13.7,6.3)--(14.3,5.7);
            
            \node at (8,7) {b)};
            
            \draw (17,-1)--(17,7);
            
            \fill[fill=orange!50!white] (21.5,1) arc (180:0:0.2cm);
            \fill[fill=orange!50!white] (18,0)--(18,1)--(23,1)--(23,3)--(26,3)--(26,0)--(18,0);
            \fill[fill=blue!50!white] (23,1.5)--(23.5,1.5)--(23.5,2.5)--(23,2.5)--(23,1.5);
            \draw (18,1)--(21.5,1) (21.5,1) arc (180:0:0.2cm) (22.5,1)--(26,1) (19,1)--(19,0) (21,1)--(21,0) (23,3)--(23,0) (23,3)--(26,3) (25,3)--(25,0);
            \draw[dashed] (21,1)--(21,3)--(23,3);
            
            \fill[fill=orange!50!white] (18,4)--(18.5,4) arc (180:0:0.2cm)--(20,4)--(20,6)--(18,6)--(18,4);
            \fill[blue!50!white] (20,4.5)--(19.5,4.5)--(19.5,5.5)--(20,5.5)--(20,4.5);
            \draw (18,4)--(18.5,4) arc (180:0:0.2cm)--(20,4)--(20,6)--(18,6)--(18,4);
            \draw[draw=green!60!black, line width=1mm] (18,6) circle (0.2cm);
            \fill[fill=orange!50!white] (21,4)--(21.7,4)--(21.7,4.5)--(22.3,4.5)--(22.3,4)--(23,4)--(23,6)--(21,6)--(21,4);
            \fill[blue!50!white] (23,4.5)--(22.5,4.5)--(22.5,5.5)--(23,5.5)--(23,4.5);
            \draw (21,4)--(21.7,4)--(21.7,4.5)--(22.3,4.5)--(22.3,4)--(23,4)--(23,6)--(21,6)--(21,4);;
            \draw[draw=red, line width=1mm] (20.7,5.7)--(21.3,6.3) (20.7,6.3)--(21.3,5.7);
            \fill[fill=orange!50!white] (24,4)--(24.5,4) arc (180:0:0.2cm)--(26,4)--(26,6)--(24,6)--(24,4);
            \fill[white] (26,4.5)--(25.5,4.5)--(25.5,5.5)--(26,5.5)--(26,4.5);
            \draw (24,4)--(24.5,4) arc (180:0:0.2cm)--(26,4)--(26,6)--(24,6)--(24,4);
            \draw[draw=red, line width=1mm] (23.7,5.7)--(24.3,6.3) (23.7,6.3)--(24.3,5.7);
            
            \node at (18,7) {c)};
            
        \end{tikzpicture}
    \end{center}
    \caption{(a) An illustration of how a zig-zag system propagates upward, snaking east and west. (b) Each of the new tile additions can be thought of as acting with 2 inputs. In a temperature-2 system, as in most zig-zag systems, the bottom and side inputs come from cooperating glues and only the tile that matches both can grow. (c) In temperature-1 GTAM systems there is no cooperation so the side input uses a glue while the bottom geometry is used to prevent the wrong tiles from growing.}
    \label{fig:geom-zigzag}
\end{figure}
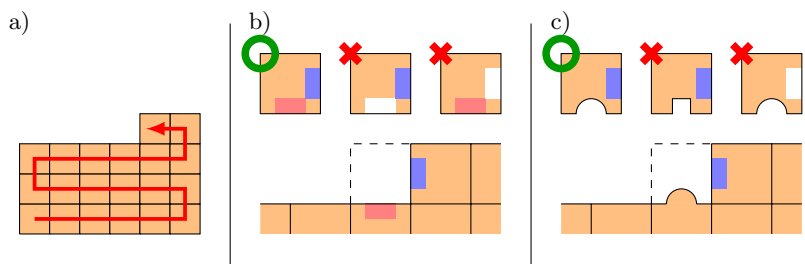

\subsection{Simulation}

Here we give a very brief intuitive definition of what it means for one tile assembly system to simulate another. See Section~\ref{sec:simulation_def_append} for more technically detailed definitions related to simulation, especially as it relates to scale factors greater than 1.

Intuitively, simulation of a system $\calT$ by a system $\calS$ requires that there is some scale factor $m \in \Z^+$ such that $m \times m$ squares of tiles in $\calS$ represent individual tiles in $\calT$, and there is a ``representation function'' capable of inspecting assemblies in $\calS$ and mapping them to assemblies in $\calT$. A representation function $R$ takes as input an assembly in $\calS$ and returns an assembly in $\calT$ to which it maps. In order for $\calS$ to correctly simulate $\calT$, it must be the case that for any producible assembly $\alpha \in \prodasm{\calT}$ that there is a corresponding assembly $\beta \in \prodasm{\calS}$ such that $R(\beta) = \alpha$. (Note that there may be more than one such $\beta$.) Furthermore, for any $\alpha' \in \prodasm{\calT}$ which can result from a tile addition to $\alpha$, there exists $\beta' \in \prodasm{\calS}$ which can result from the addition of one or more tiles to $\beta$, and conversely, $\beta$ can only grow into assemblies which can be mapped into valid assemblies of $\calT$ into which $\alpha$ can grow.

\section{Simulation of Temperature-1 aTAM Systems}

\begin{theorem}
	For any temperature-1 aTAM system, $\calT=(\T, \sigma_\calT, G_\calT, 1)$, with arbitrary symmetric glue function (a.k.a. flexible glues), there exists a temperature-1 GTAM system $\calU=(\U, \sigma_\calU, 1)$ that simulates $\calT$ using only 2 distinct glues, tile geometries of size $4n$, where $n$ is the number of glues in $\calT$, and scale factor 1.
	\label{thm:atam-sim}
\end{theorem}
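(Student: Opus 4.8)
The plan is to simulate each square tile type of $\calT$ by a single geometric tile (scale factor $1$), encoding the glue labels of $\calT$ into the geometry along each edge while using only $2$ glue types (say, a strength-$1$ glue $g$ and possibly a null glue) to actually bind tiles together. The central idea: since $\calT$ is non-cooperative (temperature $1$), a tile attaches to an assembly by matching a single glue on one edge. So for each edge of a tile in $\calU$, we must (i) carry information about which glue label of $\calT$ that edge presents, and (ii) enforce, via geometric incompatibility, that two edges can abut if and only if the corresponding glue labels of $\calT$ have positive strength under $G_\calT$.

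First I would fix an ordering of the $n$ glue labels of $\calT$, so each is identified with an index in $\{1,\dots,n\}$. For an edge presenting glue $i$, I would design its geometry of size $4n$ as the concatenation of four length-$n$ blocks (this is where the $4n$ comes from; see \fourDomGeom-style figures), where one block encodes $i$ in ``positive form'' (a bump pattern determined by $i$) and the complementary block on the abutting side must encode the set of glues compatible with $i$ in ``negative form,'' so that the bump patterns of two abutting edges overlap (are incompatible) exactly when their glue labels are \emph{incompatible} in $\calT$, and are compatible exactly when the glues bind. Concretely, I would use one block to place a single bump at position $i$ on a glue-$i$ edge, and on the opposing side use the remaining blocks to place bumps at every position $j$ such that $G_\calT(i,j)=0$; then a glue-$i$ edge meeting a glue-$j$ edge collides iff $G_\calT(i,j)=0$, and binds (via the single glue $g$, which is always present on every non-null edge) iff $G_\calT(i,j)\ge 1$. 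The four blocks and the rotation/reversal of geometry across opposite tile sides (noted in the GTAM description) are needed to keep the east/west and north/south conventions consistent and to avoid a geometry being accidentally incompatible with itself.

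The main obstacle I expect is the \emph{self-compatibility and orientation bookkeeping}: because opposite sides of a tile carry the same geometry reversed, and because a glue in $\calT$ may bind to itself, I must lay out the bump positions within the $4n$-bit geometry so that (a) no required binding is blocked by a spurious bump collision, (b) every forbidden binding is blocked, and (c) the construction respects the fact that in $\calT$ a glue on a tile's east side meets a glue on another tile's west side (so the relevant comparison is between ``$i$ positive'' on one side and ``compatible-with-$i$'' encoded on the other). Handling this cleanly is why the geometry needs four $n$-blocks rather than two, and verifying it is the technical heart of the argument.

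Having set up the tile set $\U$ (one geometric tile per tile type of $\calT$, with $\sigma_\calU$ the seed obtained by replacing each seed tile of $\calT$ with its geometric counterpart), the simulation proof is then routine: I would define the representation function $R$ to map each geometric tile back to the $\calT$-tile it encodes (and the empty/partial square appropriately), and then check the two directions of the simulation definition from Section~\ref{sec:simulation_def_append} --- that every producible assembly of $\calT$ lifts to a producible assembly of $\calU$, and that every producible assembly of $\calU$ maps under $R$ to a producible assembly of $\calT$ --- which follows immediately because a tile can attach at a location in $\calU$ exactly when the corresponding tile can attach at that location in $\calT$ (same glue match, and geometric compatibility mirrors glue-function positivity). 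Finally, for optimality I would argue separately that $2$ glues are necessary (a $1$-glue GTAM system has essentially trivial binding behavior and cannot simulate arbitrary flexible-glue temperature-$1$ systems) and that geometry size $\Omega(n)$ is required by a counting/information-theoretic argument: an edge must distinguish enough compatibility classes that $o(n)$-size geometries cannot encode, so $\Theta(n)$ is asymptotically optimal.
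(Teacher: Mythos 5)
Your proposal captures the easy half of the construction---encoding each glue's index and its incompatibility set into bump patterns so that two abutting geometries collide exactly when the glue function gives strength zero---but it misses the idea that actually necessitates four $n$-blocks and two glue types, which is the handling of \emph{glue mismatches}. In a temperature-$1$ aTAM system a tile attaches by matching a single glue, but its other edges may abut already-placed tiles with non-interacting glues; this happens routinely (see Figure~\ref{fig:1glueIssue}). Under the scheme you describe, where a glue-$i$ edge and a glue-$j$ edge ``collide iff $G_\calT(i,j)=0$,'' a tile that legitimately attaches via its south glue but mismatches on its west side would be geometrically blocked from attaching at all, so $\calU$ would fail to produce an assembly that maps to a producible assembly of $\calT$. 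This is not ``orientation bookkeeping''; it is a dynamics-level failure that two $n$-blocks per geometry cannot avoid.

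The paper's fix is to create \emph{two} geometries per glue, $\gamma_{i\alpha}$ and $\gamma_{i\beta}$, where the four $n$-domains (in order $\alpha_1,\beta_1,\beta_2,\alpha_2$) are split so that $\gamma_{i\alpha}$ uses only the outer $\alpha$-domains and $\gamma_{i\beta}$ only the inner $\beta$-domains. Because these regions are disjoint, any $\alpha$-geometry is geometrically compatible with any $\beta$-geometry regardless of the glues they encode. The two glues $g^\alpha, g^\beta$ are then assigned so that $\alpha$-sides bind only to $\alpha$-sides and $\beta$-sides only to $\beta$-sides; thus an $\alpha$/$\beta$ abutment never \emph{initiates} growth but is always geometrically permitted, which is precisely what a mismatch needs. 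This also forces $16$ geometric tile types per $\calT$-tile (a choice of $\alpha$ or $\beta$ on each of four sides), not one per tile type as you state, and it is what makes the seed construction and the equivalent-dynamics argument go through. Your proposal, as written, would not compile into a correct simulation without this addition; and the ``two glues are necessary'' claim you gesture at is a separate lower-bound theorem in the paper, not part of proving this one.
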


\begin{wrapfigure}{r}{0.5\textwidth}
    \vspace{-20pt}
    \begin{center}
        \begin{tikzpicture}[x=1.4cm, y=1.4cm]
            \fill[fill=gray!50!white] (0,0) -- (1,0) -- (1,1) -- (0,1) -- (0,0);
            \fill[fill=gray!50!white] (1,0) -- (2,0) -- (2,1) -- (1,1) -- (1,0);
            \fill[fill=gray!50!white] (0,1) -- (1,1) -- (1,2) -- (0,2) -- (0,1);
            \fill[fill=red!50!white] (1,1) -- (0.5,1.5) -- (1,2) -- (1,1);
            \fill[fill=green!50!white] (1,1) -- (1.5,0.5) -- (2,1) -- (1,1);
            \draw (0,0) -- (1,0) -- (1,1) -- (0,1) -- (0,0);
            \draw (1,0) -- (2,0) -- (2,1) -- (1,1) -- (1,0);
            \draw (0,1) -- (1,1) -- (1,2) -- (0,2) -- (0,1);
            
            \fill[fill=gray!50!white] (1.5,2.5) -- (2.5,2.5) -- (2.5,3.5) -- (1.5,3.5) -- (1.5,2.5);
            \fill[fill=red!50!white] (1.5,2.5) -- (2,3) -- (1.5,3.5) -- (1.5,2.5);
            \fill[fill=blue!50!white] (1.5,2.5) -- (2,3) -- (2.5,2.5) -- (1.5,2.5);
            \draw (1.5,2.5) -- (2.5,2.5) -- (2.5,3.5) -- (1.5,3.5) -- (1.5,2.5);
            
            \fill[fill=gray!50!white] (2.5,1) -- (3.5,1) -- (3.5,2) -- (2.5,2) -- (2.5,1);
            \fill[fill=blue!50!white] (2.5,1) -- (3,1.5) -- (2.5,2) -- (2.5,1);
            \fill[fill=green!50!white] (2.5,1) -- (3,1.5) -- (3.5,1) -- (2.5,1);
            \draw (2.5,1) -- (3.5,1) -- (3.5,2) -- (2.5,2) -- (2.5,1);
            
            \draw[color=red!80!white, line width=2pt, -latex] (1.8,2.4) -- (1.5,1.6);
            \draw[color=red!80!white, line width=2pt, -latex] (2.4,1.5) -- (1.6,1.5);
            
        \end{tikzpicture}
    \end{center}
    \caption{An example of a situation in an aTAM system which mandates the use of two glues in the simulating GTAM system. Here the blue glue is incompatible with both the red and green glues. If only a single glue was used in the GTAM system, both of the tiles would necessarily be incompatible and could not fit.}
    \vspace{-10pt}
    \label{fig:1glueIssue}
\end{wrapfigure}
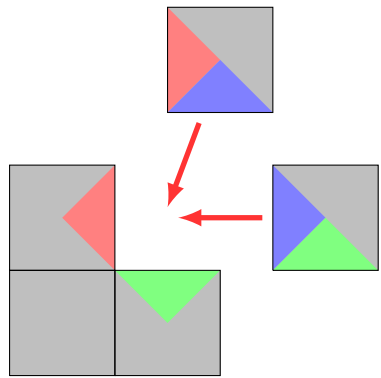

To prove this theorem, as is done fully in section \ref{sec:proof-atam-sim}, we construct a GTAM system which simulates a given aTAM system. The construction seeks to implement the binding behaviour of glues using the compatibility behaviour of geometries. To represent any given glue we construct two corresponding geometries which we call the $\alpha$ and $\beta$ versions. Both of these geometries are divided into $4$ domains of size $n$, meaning that there are $4n$ potential bump locations on each. The domains, from left to right, are named $\alpha_1$, $\beta_1$, $\beta_2$, and $\alpha_2$ and examples of what these domains look like can be seen in figure \ref{fig:glues2geom}. Also keep in mind that geometries are rotated in order to be placed on the various faces of a geometric tile. This means that the westernmost domain of a north geometry on a geometric tile would be $\alpha_1$, whereas the westernmost domain of a south geometry would be $\alpha_2$. When indexing bump locations, we count from left to right for domains $\alpha_1$ and $\beta_1$ and count from right to left for domains $\alpha_2$ and $\beta_2$. This means that if two geometries were on abutting faces of adjacent tiles, the location $i$, for $1\le i \le n$, of domain $\alpha_1$ on the first geometry would line up with location $i$ of domain $\alpha_2$ on the second geometry.

\begin{wrapfigure}{r}{0.5\textwidth}
    \vspace{-30pt}
    \begin{center}
		$$G =
		\begin{bmatrix}
			0 & 1 & 0 & 1 \\
			1 & 1 & 0 & 0 \\
			0 & 0 & 1 & 0 \\
			1 & 0 & 0 & 1 \\
		\end{bmatrix}$$

        \begin{tikzpicture}[x=.35cm, y=.35cm]
		\fourDomGeom{4}{$\alpha_1$}{1000}{$\beta_1$}{0000}{$\beta_2$}{0000}{$\alpha_2$}{0101}
		\node at (8, 1) {geometry: $\gamma_{1\alpha}$};
		\end{tikzpicture}
		\begin{tikzpicture}[x=.35cm, y=.35cm]
		\fourDomGeom{4}{$\alpha_1$}{0000}{$\beta_1$}{1000}{$\beta_2$}{0101}{$\alpha_2$}{0000}
		\node at (8, 1) {geometry: $\gamma_{1\beta}$};
		\end{tikzpicture}
		\begin{tikzpicture}[x=.35cm, y=.35cm]
		\fourDomGeom{4}{$\alpha_1$}{0100}{$\beta_1$}{0000}{$\beta_2$}{0000}{$\alpha_2$}{1100}
		\node at (8, 1) {geometry: $\gamma_{2\alpha}$};
		\end{tikzpicture}
		\begin{tikzpicture}[x=.35cm, y=.35cm]
		\fourDomGeom{4}{$\alpha_1$}{0000}{$\beta_1$}{0100}{$\beta_2$}{1100}{$\alpha_2$}{0000}
		\node at (8, 1) {geometry: $\gamma_{2\beta}$};
		\end{tikzpicture}
    \end{center}
    \vspace{-10pt}
	\caption{A glue function can be represented by a symmetric matrix. The strength of the bond between glues $g_i$ and $g_j$ is represented by the value $G_{i,j}$. Given a glue function, we can construct geometries whose compatibility behaviour emulates the binding behaviour of the glues. Illustrated above are the $\alpha$ and $\beta$ versions of the geometries corresponding to the glues $g_1$ and $g_2$ as described by the glue function.}
	\vspace{-10pt}
	\label{fig:glues2geom}
\end{wrapfigure}
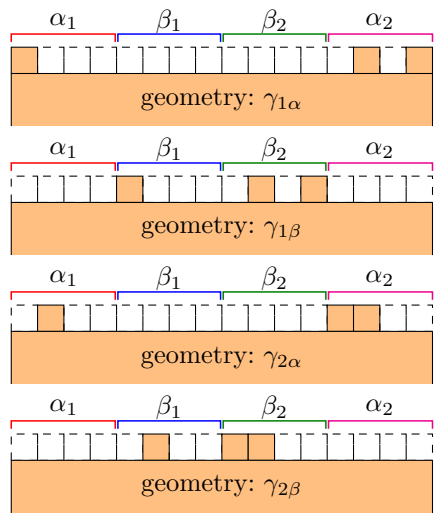

Out of these 4 domains, each geometry has only 2 functional domains. The $\alpha$ version of a geometry will only use the $\alpha$ domains and the $\beta$ version of a geometry will only use the $\beta$ domains. The first functional domain in each of these geometries, the $\alpha_1$ and $\beta_1$ domains respectively, encode which glue is being represented by placing a bump in the corresponding location. For example, if a geometry corresponds to glue 3, it will have a bump in location 3 of its first functional domain. The second domains in each of the two geometries, $\alpha_2$ and $\beta_2$ respectively, encode the binding behaviour of the corresponding glue. This is done by placing a bump in all of the locations corresponding to glues to which the represented glue cannot interact. For example, if the represented glue does not interact with glues 1 and 3, then there will be bumps in locations 1 and 3 of the second functional domain. Consider the $\alpha$ versions of the geometries corresponding to two glues, say $g_1$ and $g_2$, which cannot interact. The $\alpha_2$ domain of the geometry corresponding to $g_1$ will have a bump in position $2$ indicating that it cannot interact with glue $g_2$. This bump will be incompatible with the bump in location $2$ of the $\alpha_1$ domain of the geometry corresponding to $g_2$. The same is true for the $\beta$ domains of the $\beta$ version geometries. so, whenever two glues cannot interact, the corresponding geometries of the same version will be incompatible. Figure \ref{fig:glues2geom} demonstrates a glue function and what some of the corresponding geometries look like.

The reason that we need 2 versions of each geometry is to accommodate mismatches. Because mismatched glues can, and often do, legitimately occur in aTAM systems, like in figure \ref{fig:1glueIssue}, we need two versions, $\alpha$ and $\beta$, of each geometry which are always compatible with geometries of the other version. Because geometries of different versions use exclusive functional domains, they will always be geometrically compatible with one another. The $\alpha$ domains occur on the outside of a geometry and the $\beta$ domains on the inside, so when abutting, they cannot overlap. Moreover, we need 2 distinct glues in $\calU$; since, while we do want geometries to be compatible with opposite versions, we don't want them initiating growth with tiles whose geometries are of opposite version. Thus all $\alpha$ version geometries will have one glue and all $\beta$ version geometries another. This allows the $\alpha$ and $\beta$ versions of glues to represent mismatches in the simulated system.

\begin{wrapfigure}{r}{0.35\textwidth}
    \vspace{-10pt}
    \begin{center}
        \begin{tikzpicture}[x=0.5cm, y=0.5cm]
            
            \draw[draw=cyan!50!white, line width=1mm] (1,1) -- (1,4);
            \draw[draw=orange!50!white, line width=1mm] (1,1) -- (4,1);
            \draw[draw=green!50!white, line width=1mm] (4,1) -- (4,7);
            \draw[draw=red!50!white, line width=1mm] (1,4) -- (7,4);
            \draw[draw=blue!50!white, line width=1mm] (7,1) -- (7,4);
            \draw[draw=blue!50!white, line width=1mm] (1,7) -- (4,7);
        
            \fill[fill=gray!70!white] (0,0) -- (0,2) -- (2,2) -- (2,0) -- (0,0);
            \fill[fill=cyan!50!white] (2,2) -- (1,1) -- (0,2) -- (2,2);
            \fill[fill=orange!50!white] (2,2) -- (1,1) -- (2,0) -- (2,2);
            \draw (0,0) -- (0,2) -- (2,2) -- (2,0) -- (0,0);
            
            \fill[fill=gray!70!white] (3,0) -- (3,2) -- (5,2) -- (5,0) -- (3,0);
            \fill[fill=green!50!white] (5,2) -- (4,1) -- (3,2) -- (5,2);
            \fill[fill=orange!50!white] (3,2) -- (4,1) -- (3,0) -- (3,2);
            \draw (3,0) -- (3,2) -- (5,2) -- (5,0) -- (3,0);
            
            \fill[fill=gray!70!white] (0,3) -- (2,3) -- (2,5) -- (0,5) -- (0,3);
            \fill[fill=red!50!white] (2,5) -- (1,4) -- (2,3) -- (2,5);
            \fill[fill=cyan!50!white] (2,3) -- (1,4) -- (0,3) -- (2,3);
            \draw (0,3) -- (2,3) -- (2,5) -- (0,5) -- (0,3);
            
            \fill[fill=gray!70!white] (6,0) -- (6,2) -- (8,2) -- (8,0) -- (6,0);
            \fill[fill=blue!50!white] (6,2) -- (7,1) -- (8,2) -- (6,2);
            \draw (6,0) -- (6,2) -- (8,2) -- (8,0) -- (6,0);
            
            \fill[fill=gray!70!white] (6,3) -- (6,5) -- (8,5) -- (8,3) -- (6,3);
            \fill[fill=blue!50!white] (6,3) -- (7,4) -- (8,3) -- (6,3);
            \fill[fill=red!50!white] (6,3) -- (7,4) -- (6,5) -- (6,3);
            \draw (6,3) -- (6,5) -- (8,5) -- (8,3) -- (6,3);
            
            \fill[fill=gray!70!white] (0,6) -- (2,6) -- (2,8) -- (0,8) -- (0,6);
            \fill[fill=blue!50!white] (2,6) -- (1,7) -- (2,8) -- (2,6);
            \draw (0,6) -- (2,6) -- (2,8) -- (0,8) -- (0,6);
            
            \fill[fill=gray!70!white] (3,6) -- (5,6) -- (5,8) -- (3,8) -- (3,6);
            \fill[fill=blue!50!white] (3,6) -- (4,7) -- (3,8) -- (3,6);
            \fill[fill=green!50!white] (3,6) -- (4,7) -- (5,6) -- (3,6);
            \draw (3,6) -- (5,6) -- (5,8) -- (3,8) -- (3,6);
            
            \node at (1,1) {\textbf{S}};
            \node at (1,4) {\textbf{U}};
            \node at (4,1) {\textbf{R}};
            \node at (4,7) {\textbf{A}};
            \node at (7,4) {\textbf{B}};
            
        \end{tikzpicture}
    \end{center}
    \caption{The tile set used in Theorem \ref{thm:single-glue}. The lines between tiles represent possible attachments. Notice that, if tile \emph{S} is the seed, the final configuration must be a 2 by 2 square.}
    \vspace{-10pt}
    \label{fig:1glueTileset}
\end{wrapfigure}
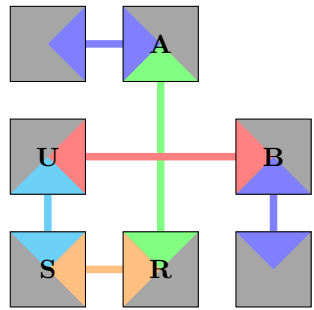

This construction demonstrates that the behaviour of temperature-1 aTAM systems with arbitrary symmetric glue functions can be simulated by temperature-1 GTAM systems with no cost in scale using a fixed number of glues, namely $2$. It's important to note that the GTAM systems in our construction only use standard glues which bind only to themselves. This proof implies that two glues in a GTAM system are sufficient for simulating arbitrary aTAM systems at temperature-1; the following theorem shows that, using fewer than two glues, not all aTAM systems can be simulated by GTAM systems. This implies that two glues are necessary for allowing glue mismatches to be properly simulated.

\begin{theorem}
     There exists an aTAM system at temperature-1 which cannot be simulated by a temperature-1 GTAM system at scale factor 1 using $< 2$ glues, regardless of the geometry size of the tiles.
     \label{thm:single-glue}
\end{theorem}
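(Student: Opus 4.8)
The plan is to exhibit a tiny temperature-$1$ aTAM system $\calT$ whose two terminal assemblies are both the $2\times 2$ square, differing only in the tile placed at the square's corner, and engineered so that a single glue leaves a GTAM simulator no way to simultaneously (i) place the corner tile when the square is almost complete and (ii) refuse to place it too early. Concretely (see Figure~\ref{fig:1glueTileset}), $\calT$ has the five tile types $S,U,R,A,B$ with $S$ the seed at the origin; $R$ attaches east of $S$ and exposes a ``green'' glue on its north edge, $U$ attaches north of $S$ and exposes a ``red'' glue on its east edge, and at position $(1,1)$ either $A$ attaches (binding the green glue below, while its west glue is an inert ``blue'' that does not interact with $U$'s red) or $B$ attaches (binding the red glue to its west, while its south glue is an inert ``blue'' that does not interact with $R$'s green). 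The two facts we need, both immediate from the glue table, are: the producible assemblies include $\{S,R,U\}$ and its two terminal extensions $\{S,R,U,A\}$ and $\{S,R,U,B\}$; but the assembly consisting of $S$, $R$, and $B$ at $(1,1)$ (with $(0,1)$ empty) is \emph{not} producible, since $B$ has no available bond when $U$ is absent, and symmetrically the assembly consisting of $S$, $U$, and $A$ at $(1,1)$ is not producible.

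Suppose for contradiction that a GTAM system $\calU$ using a single glue $g$ simulates $\calT$ at scale $1$. The one-glue hypothesis is the heart of the matter: every abutting pair of edges now carries the matching glue $g$, so a tile placed next to one or more already-present tiles is automatically $1$-stable as soon as its geometry does not collide with any of them. Hence in $\calU$ geometry is the \emph{only} mechanism that can forbid an attachment, and whether one tile type fits immediately adjacent to another is a fixed, context-free relation between the two types. Now extract the relevant tiles from $\calU$. Since $\{S,R,U\}$ is producible in $\calT$, $\calU$ has a producible assembly $\gamma$ that represents it; at scale $1$, $\gamma$ contains the seed at $(0,0)$, a tile $\bar R$ at $(1,0)$ representing $R$, and a tile $\bar U$ at $(0,1)$ representing $U$. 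Because $\{S,R,U\}$ grows in $\calT$ to both $\{S,R,U,A\}$ and $\{S,R,U,B\}$, correctness of the simulation forces $\gamma$ to grow in $\calU$ into assemblies representing those two targets, each of which adds a tile at $(1,1)$ sitting next to $\bar R$ below and $\bar U$ to the left. In particular there is a tile $\bar B$ representing $B$ whose south geometry is compatible with $\bar R$'s north geometry, and a tile $\bar A$ representing $A$ whose west geometry is compatible with $\bar U$'s east geometry.

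To finish, trace a production of $\gamma$ from the seed; among the tiles occupying $(1,0)$ and $(0,1)$ in $\gamma$, one is placed before the other. If $\bar R$ at $(1,0)$ is placed first, then the intermediate assembly containing only the seed and $\bar R$ is producible in $\calU$, with $(0,1)$ and $(1,1)$ empty, and from it $\bar B$ can attach at $(1,1)$: it abuts only $\bar R$, their geometries are compatible, and this one bond meets the temperature. The resulting $\calU$-assembly represents the assembly consisting of $S$, $R$, and $B$ at $(1,1)$, which is not producible in $\calT$ --- contradicting the requirement that each single-tile step of $\calU$ project to a single-tile step of $\calT$ or leave the represented $\calT$-assembly unchanged. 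If instead $\bar U$ at $(0,1)$ is placed first, the symmetric argument attaches $\bar A$ above $\bar U$ and produces the non-producible assembly consisting of $S$, $U$, and $A$ at $(1,1)$. Either way we reach a contradiction, so no such $\calU$ exists, and the ``regardless of geometry size'' clause is automatic since geometry sizes played no role in the argument.

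The geometric core of this argument is exactly the impossibility pictured in Figure~\ref{fig:1glueIssue}, and I expect it to be easy; the delicate part, and the one I would spend the most care on, is making the last step airtight against the full generality of scale-$1$ simulation, in which $\calU$ may use several tile types per $\calT$-tile and may pad assemblies with ``fuzz'' tiles that represent the empty tile, and in which the target square can be assembled in several orders. The plan there is to fix no canonical representative but to reuse precisely the tiles $\bar R,\bar U,\bar A,\bar B$ occurring in the chosen witness $\gamma$ and its forced extensions, to choose $\gamma$ so that no spurious tile sits at or beside $(1,1)$, and to invoke the ``follows'' and ``models'' conditions of simulation to guarantee that the tile arriving at $(1,1)$ in each extension genuinely represents $A$ or $B$; the dichotomy over which of $(1,0)$, $(0,1)$ is filled first in the production of $\gamma$ is what makes both $A$ and $B$ indispensable to the construction.
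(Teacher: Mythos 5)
Your argument is essentially the paper's. Both proofs use the five-tile system of Figure~\ref{fig:1glueTileset} and both exploit the same fact: with a single glue (present on every abutting edge), any pair of geometrically compatible faces also binds, so a corner tile that must be able to sit next to both $\bar U$ and $\bar R$ in the L-shape $\{S,U,R\}$ can already attach before the L-shape is complete, producing a GTAM assembly whose representation ($\{S,R,B\}$ or $\{S,U,A\}$) is not producible in $\calT$. The paper organizes the case analysis via a WLOG on whether the corner tile represents $A$ or $B$, justified by the evident reflective symmetry of the tile set; you organize it by splitting on whether $\bar R$ or $\bar U$ is placed first in a chosen assembly sequence for the L-shape. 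These are two presentations of the same argument. Your version is somewhat more careful about the simulation formalities — tracking a concrete producible prefix rather than appealing abstractly to producibility of $\{S,U\}$, and flagging the need to control fuzz and the possibility of several representing tile types — but these concerns do not alter the conclusion, and the paper's shorter write-up tacitly relies on the same reasoning.
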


\begin{proof} Consider the temperature 1 aTAM system, say $\calT$, presented in Figure \ref{fig:1glueTileset}, wherein the tile labelled \emph{S} is the seed. Notice that $\calT$ is not directed; there are multiple final configurations, each of which are $2 \times 2$ squares. Also, let it be the case that each glue binds only to itself, so that the blue glue does not bind with any other glue, for example. Now, for contradiction, suppose that there is some temperature-1 GTAM system $\calU$ that does simulate $\calT$ at scale factor 1 using only a single glue. Since $\calU$ simulates $\calT$, it must be able to simulate the growth of $\calT$ from any possible configuration. Consider then the configuration of $\calT$ in which the tiles labelled \emph{U} and \emph{R} have grown to form an \emph{L} shape. In this configuration, there are two tiles which can grow into the corner opposite to tile \emph{S}: tiles \emph{A} and \emph{B}. Furthermore, notice that if either tile attaches, there will be some glue mismatch since the blue glue does not match with either the red or green glues.

Now imagine a corresponding, \emph{L} shaped configuration in $\calU$. Since $\calU$ simulates $\calT$, there must be some geometric tile corresponding to either tile \emph{A} or \emph{B} which can attach in the corner opposite \emph{S}. If we suppose, without loss of generality, that this was a tile corresponding to \emph{A}, then it must be the case that this geometric tile has a geometry on its west face which is compatible with the geometry on the east face of a tile corresponding to \emph{U} despite the fact that the glues don't match in $\calT$. Therefore, in the case where a geometric tile corresponding to tile \emph{R} hasn't yet attached, there would be the possibility for the geometric tile corresponding to \emph{U} to attach to that geometric tile corresponding to \emph{A} since there is only a single strength-1 glue and the geometries are compatible. This, however, would be a violation of the dynamics of $\calT$ and thus such a $\calU$ could not simulate $\calT$. \qed
\end{proof}

The previous proof demonstrated that a GTAM system needs at least two glues to simulate aTAM systems at temperature-1 and scale factor 1. Furthermore, the prior proof gave a construction of a GTAM system which used exactly two glues to simulate any aTAM system at temperature-1 and scale factor 1. Additionally, the construction used geometries of size $4n$, and it is shown in \cite{GeoTiles} that the lower bound on the size of geometries needed to represent some non-diagonal glue functions is $\Theta(n)$.

\section{Simulation of Temperature-1 Duple TAM Systems}

The Dupled abstract Tile Assembly Model (DaTAM) is similar to the aTAM, but with the addition of duple tiles which are simply 2$\times$1 tiles. While this addition may seem minimal, the DaTAM allows for weak cooperation via geometric hindrance, and is capable of universal computation at temperature-1 \cite{jDuples}. It's important to note that in the DaTAM, we assume a diagonal glue function.

\begin{theorem}
    For every temperature-1 DaTAM system $\calD$, there exists a temp\-erature-1 GTAM system $\calS$, using only 2 glues, which simulates it at scale factor 1.
    \label{thm:duple-sim}
\end{theorem}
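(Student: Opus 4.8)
The plan is to extend the construction behind Theorem~\ref{thm:atam-sim}. As there, each glue of $\calD$ is represented by a pair of geometries -- an $\alpha$-version and a $\beta$-version -- with all $\alpha$-geometries carrying one of the two glues of $\calS$ and all $\beta$-geometries the other, so that two abutting same-version encodings are compatible exactly when the corresponding glues bind in $\calD$, while opposite-version encodings are always compatible but never initiate growth. Square tiles of $\calD$ are then simulated just as in Theorem~\ref{thm:atam-sim} (since $\calD$'s glue function is diagonal, the ``does-not-interact'' domain of a glue's encoding simply carries a bump for every other glue). The only new work is to simulate a $2\times1$ duple by two $1\times1$ geometric tiles at scale factor $1$.

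For a horizontal duple $d$ of $\calD$ occupying cells $p$ and $p'=p+(1,0)$ (vertical duples are symmetric), introduce two half-tiles $d_L$ and $d_R$, each coming -- like the square tiles -- in the several versions needed to absorb glue mismatches on their \emph{external} sides. On each external side of $d_L$ (the sides of $d$'s left cell other than the shared edge) place the glue encoding of the glue $d$ carries there, and similarly for $d_R$; thus once the pair is in place, the $2\times1$ block presents to its neighbours exactly the glues $d$ exposes in $\calD$, so subsequent growth in $\calS$ tracks subsequent growth in $\calD$ through the same encoding used for square tiles. On the shared \emph{internal} edge put a dedicated pair of geometries (using fresh bump positions disjoint from the glue-encoding positions) that are (i)~mutually compatible, (ii)~encode the identity of $d$, so that $d_L$ is compatible only with its own $d_R$ and with no other half-tile -- itself an instance of the flexible-glue-via-geometry simulation of Theorem~\ref{thm:atam-sim} -- and (iii)~carry one extra ``coordination'' bump, placed (in a fresh pair of mirrored sub-domains) so that \emph{every} internal geometry is incompatible with \emph{every} external geometry while external geometries remain mutually compatible.

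Property (iii) is what substitutes for the DaTAM's non-local ``both cells must be empty'' rule. The key lemma is: if $d_L$ can geometrically attach at $p$ via one of its external sides, then $p'$ must be empty. Indeed, any occupant of $p'$ presents to $d_L$ across the $p$--$p'$ edge either an external side -- if it is a square tile, a left half of a horizontal duple sitting at $p'$, or a half of a vertical duple -- which by (iii) clashes with $d_L$'s internal geometry, contradicting that $d_L$ attaches; or an internal side, which happens only if it is the right half of a horizontal duple whose footprint is $\{p,p'\}$, and that requires $p$ to be occupied, contradicting that $d_L$ is attaching there. Hence $d_L$ attaches at $p$ exactly when the corresponding glue of $d$ is matched by an adjacent exposed glue \emph{and} $p'$ is empty -- precisely the condition under which $\calD$ places $d$ at $\{p,p'\}$; the symmetric statement (with $d_R$, $p'$, $p$) covers the case in which $d$ binds through its right cell. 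Once one half is present, (i) and (ii) make the matching other half the unique tile able to occupy the remaining cell, so the block can only complete to $d$. One then verifies that the block representation function -- sending a completed pair to $d$, a singleton half to the duple it is thereby committed to, and a square block to its tile -- together with the $\alpha/\beta$ bookkeeping inherited from Theorem~\ref{thm:atam-sim}, shows that $\calS$ simulates $\calD$ at scale factor $1$ using only those two glues, and that partially formed duple blocks can grow only in ways consistent with $\calD$.

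The main obstacle is precisely the tension between the DaTAM rule, which commits two adjacent cells atomically on the basis of a single glue match, and the strictly local binding rule of the GTAM, where a tile can ``see'' only its four neighbours: a half-tile must somehow refuse to attach while the cell two steps away along the duple is already filled. The resolution is property (iii) together with the observation above -- that the only tile which could sit in the partner cell without triggering the internal/external incompatibility is one whose own footprint contains the half-tile's cell, and hence cannot coexist with the half-tile being placed. The remaining ingredients -- handling mismatched glues, assigning $\alpha$/$\beta$ versions, and encoding duple identities on internal edges -- are all reuses of the machinery already established for Theorem~\ref{thm:atam-sim}.
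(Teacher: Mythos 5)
Your construction matches the paper's: the ``coordination bump'' in a mirrored pair of fresh sub-domains plays exactly the role of the paper's normal-flag/duple-flag bumps at the two ends of the geometry, the internal geometry encodes the duple's identity so that only the matching half can complete it, and the $\alpha/\beta$ versioning is reused unchanged for external sides. Your key lemma is stated a touch too strongly (if the matching $d_R$ already sits at $p'$, then $p'$ is not empty yet $d_L$ may still attach via its internal side), but your following sentences handle precisely that completion case, so the overall argument coincides with the paper's.
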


\begin{figure}
    \begin{center}
        \begin{tikzpicture}[x=.5cm, y=.5cm]
    		\def\drawsquare[#1]#2#3{\filldraw[#1] (#2,#3) -- (#2,#3 + 1) -- (#2 + 1,#3 + 1) -- (#2 + 1,#3) -- (#2,#3);}
    		\drawsquare [fill=orange!50!white]{0}{2}
    		
    		\drawsquare [fill=white, dashed]{1}{2}
    		\drawsquare [fill=white, dashed]{2}{2}
    		\drawsquare [fill=white, dashed]{3}{2}
    		\drawsquare [fill=white, dashed]{4}{2}
    		
    		\drawsquare [fill=white, dashed]{5}{2}
    		\drawsquare [fill=orange!50!white]{6}{2}
    		\drawsquare [fill=white, dashed]{7}{2}
    		\drawsquare [fill=white, dashed]{8}{2}
    		
    		\drawsquare [fill=orange!50!white]{9}{2}
    		\drawsquare [fill=orange!50!white]{10}{2}
    		\drawsquare [fill=white, dashed]{11}{2}
    		\drawsquare [fill=orange!50!white]{12}{2}
    		
    		\drawsquare [fill=white, dashed]{13}{2}
    		\drawsquare [fill=white, dashed]{14}{2}
    		\drawsquare [fill=white, dashed]{15}{2}
    		\drawsquare [fill=white, dashed]{16}{2}
    		
    		\drawsquare [fill=white, dashed]{17}{2}
    		
    		\draw [draw=red, line width=0.2mm] (1.05,3.2) -- (1.05,3.5) -- (4.95,3.5) -- (4.95,3.2);
    		\draw [draw=blue, line width=0.2mm] (5.05,3.2) -- (5.05,3.5) -- (8.95,3.5) -- (8.95,3.2);
    		\draw [draw=green!60!black, line width=0.2mm] (9.05,3.2) -- (9.05,3.5) -- (12.95,3.5) -- (12.95,3.2);
        	\draw [draw=magenta, line width=0.2mm] (13.05,3.2) -- (13.05,3.5) -- (16.95,3.5) -- (16.95,3.2);
    		\draw [draw=green!60!black, line width=0.2mm] (-2,3.5) -- (0.95,3.5) -- (0.95,3.2);
    		\draw [draw=green!60!black, line width=0.2mm] (17.05,3.2) -- (17.05,3.5) -- (20,3.5);
    		
    		\node at (-1,4) {Normal Flag};
    		\node at (3,4) {Domain $\alpha_1$};
    		\node at (7,4) {Domain $\beta_1$};
    		\node at (11,4) {Domain $\beta_2$};
    		\node at (15,4) {Domain $\alpha_2$};
    		\node at (19,4) {Duple Flag};
    		
    		\fill[fill=orange!50!white] (0,0) -- (0,2) -- (18,2) -- (18,0) -- (0,0);
    		\draw (0,0) -- (0,2) -- (18,2) -- (18,0);
    	\end{tikzpicture}
    \end{center}
	\caption{Example of geometry corresponding to glue from a DaTAM system. Notice that the normal flag contains a bump and not the duple flag suggesting that this geometry corresponds to a normal glue and not the glue that would go between the two tiles that make up a duple tile.}
	\label{fig:duple_geometry}
\end{figure}

The proof of this theorem, presented fully in appendix section \ref{sec:proof-duple-sim}, is very similar to the aTAM simulation construction above. The only difference is the addition of two extra bump locations on each geometry. These bump locations appear on the far left and far right of the geometry and represent whether the geometry represents a normal glue or a special duple geometry respectively. Normal glues behave exactly as in the aTAM construction. Since no duple tiles exist in the GTAM, such tiles have to be simulated by half tiles in the GTAM. Two half tiles make a duple and the geometry between them is a duple geometry. Notice that duple geometries and normal glue geometries will be incompatible because their bumps are opposite each other, meaning that when the geometries lie on abutting faces of adjacent tiles, the bumps will be overlapping. Because of this, if half of a duple attaches, no other tile, except the correct second half, will be able to grow into the location reserved for the second half since the bumps would interfere. Moreover, a duple will never attach to a position wherein its second half is already blocked for the same reason. This enforces that the dynamics of the GTAM system can model those of the DaTAM system.
\section{Glue Cooperation Cannot be Simulated With Geometric Hindrance}

In this section, we first show that there exists a directed temperature-2 aTAM SASS (i.e. a fully deterministic temperature-2 aTAM system) which cannot be simulated by any temperature-1 GTAM system. A brief overview is given here, and the full proof can be found in the Appendix.

\begin{theorem}\label{thm:imposs}
There exists a directed temperature-2 aTAM SASS $\calS$ that cannot be simulated by any GTAM system at temperature 1.
\end{theorem}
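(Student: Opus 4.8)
The plan is to exhibit a concrete directed temperature-2 aTAM SASS $\calS$ matching the description advertised in the introduction — a reused tile type whose continuation is routed in different directions on different uses — and then to show that no temperature-1 GTAM system can track it. For $\calS$ I would build a ``routing gadget'': a seed-rooted path that threads deterministically through the plane, keeping exactly one frontier location at every step, in which a distinguished tile type $t$ is always entered from its south side but has a later tile attach \emph{sometimes to its north and sometimes to its east}. Which of the two happens is forced by genuine glue cooperation: the candidate northern successor of $t$ binds only by matching $t$'s north glue \emph{together with} a glue exposed by an adjacent ``control'' tile, and symmetrically the candidate eastern successor needs $t$'s east glue together with a control glue; the control tile is the carrier of the bit that picks the direction, and by design it is placed \emph{after} $t$ in $\calS$'s (essentially unique) assembly sequence. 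Instantiating the gadget for a fixed bit string — the bits themselves produced deterministically, e.g.\ by a small embedded counter, so that $\calS$ remains directed with a single terminal assembly — yields $t$ with output side north at some locations and east at others, all with the same fixed input side.

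Next, suppose for contradiction that a temperature-1 GTAM system $\calU$ simulates $\calS$ at some scale factor $m$ via a representation function $R$. Two facts drive the argument. (i) At temperature $1$ every tile of $\calU$ attaches by matching a single strength-$1$ glue, and a tile's geometry can only \emph{forbid} an attachment (by overlapping a present neighbor's bumps), never \emph{enable} one. (ii) Since every producible assembly of $\calU$ must be mapped by $R$ into $\prodasm{\calS}$, the macroscopic order in which $\calU$ fills in macrotile regions must be consistent with the order dictated by $\calS$; in particular, the $\calU$-block representing the control tile of a given round cannot be completed before the $\calU$-block representing that round's copy of $t$. Combining (i) and (ii): while $\calU$ is filling the $t$-block, its only incoming glues come from the macrotile to its south (the unique input side of $t$), and no information about the routing bit is yet available anywhere adjacent to that block.

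The crux is to show that the $t$-block is then forced into an impossible choice. If its fill-in ever exposes a glue on its north boundary, that glue — which in ``north'' rounds must single-handedly trigger the northern successor block — also triggers speculative growth of a would-be northern successor in ``east'' rounds; here I would use a window-movie-style exchange argument, recording the ordered sequence of glue placements and geometric bump-profiles crossing a window drawn around the $t$-block and its south input (a sequence that coincides across the two kinds of rounds up to the moment the control tile appears), to conclude that this speculative growth cannot be sterically quarantined, so it either completes into a spurious macrotile where $\calS$ has none or obstructs the macrotile $\calS$ actually demands at that location; the mirror argument handles a glue exposed on the east boundary during a ``north'' round. The only remaining possibility is that the $t$-block stalls, exposing no north or east output until the control block appears; but $\calS$ is arranged (via where the control tile sits relative to $t$ and when it is placed in the unique assembly sequence) so that in $\calU$ the control block cannot deliver the routing bit into the stalled region before fact (ii) forces the $t$-block to expose some output for assembly to continue. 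Every branch therefore produces a producible $\calU$-assembly whose $R$-image is not producible in $\calS$, contradicting simulation, which proves Theorem~\ref{thm:imposs}.

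The step I expect to be the main obstacle is exactly this window-movie exchange in the presence of geometry and fuzz: I need a GTAM analogue of the window movie lemma that tracks \emph{both} glue placements and bump profiles across a cut, I need to choose the window so that fuzz (tiles of $\calU$ protruding outside macrotile regions) is confined and does not invalidate the splice, and I need the whole argument to be uniform over all scale factors $m$ — which I would obtain by making the bit-carrying wire of $\calS$ long enough for a suitable round that, in $\calU$, some tile type together with its finitely many relevant geometric states must repeat along the wire, so that the wire can be pumped while preserving the transmitted bit and the contradiction goes through for each fixed $m$. A secondary point is the usual one that $\calU$ may use context-dependent blocks for the single $\calS$-tile-type $t$; the construction of $\calS$ must therefore make the \emph{incoming} context of $t$ (its south input and the already-placed surrounding tiles) genuinely identical across the ``north'' and ``east'' rounds, so that no such context-dependence can rescue $\calU$.
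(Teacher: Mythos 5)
Your construction idea is close in spirit to the paper's --- both hinge on a reused tile type that keeps a fixed input side but has a data-dependent output side, with the deciding information arriving \emph{after} that tile is placed --- but your proposed mechanism for the contradiction is genuinely different, and it has gaps that the paper's argument sidesteps.

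The paper's $\calS$ is a ``planter'' that repeatedly spawns binary decrementers of unboundedly increasing height, each of which grows a long green column back down to the planter; a yellow tile is placed by cooperation between the bottom green tile and the planter, and then a red tile. The crux is that the green columns get arbitrarily long, so for any scale factor $m$ and any finite glue/geometry set, pigeonhole guarantees two horizontal cuts across one of these columns with identical bond-forming window movies. The Window Movie Lemma then lets the column be \emph{pumped down}, translating the entire path of bound tiles that reaches from the green column into the yellow and red macrotiles up to a higher location --- outside the one-macrotile fuzz band --- which directly violates the clean-mapping condition of the simulation definition. No case analysis of ``which glue does the $t$-block expose'' is required; the existence of that path, plus repetition along the column, plus the geometry-independent splice, is the whole contradiction.

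Your argument instead tries to trap the $t$-macrotile in a commitment dilemma: it must expose an output glue before the control block is in place, and whichever it exposes causes spurious growth in the other kind of round. This is where the gap is. In a temperature-$1$ GTAM system, a tile in the $t$-macrotile region can expose a north glue without any northern neighbor ever attaching, because attachment requires not only a glue match but also geometric compatibility, and a tile placed inside the north geometry strip of the $t$-macrotile can block everything until the control information arrives elsewhere in the block. You acknowledge this (``sterically quarantined'') but do not close it; you would have to rule out, for every $m$, every placement order, and every configuration of bumps, that $\calU$ can stall the $t$-block's outward face while fuzz carries the control bit around. That is exactly the kind of adversarial case your ``window-movie exchange'' gesture is meant to handle, but the exchange argument you need is the heart of the proof, not a footnote, and you have not specified which window you cut, why its bond-forming submovie must repeat, or why the splice keeps the result inside the producible set. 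The paper avoids this entire difficulty by making the green column, not the $t$-block, the site of the exchange: the column is one-dimensional, unboundedly long, grown in a fixed order (the assembly sequence is chosen to place all planter-connected fuzz first and to fill lowest-$y$ frontier locations first), and the path to the red macrotile provably crosses it, so a single application of the Window Movie Lemma suffices. Secondary gaps in your sketch: you do not verify that your routing gadget is actually a SASS and directed (the paper does this by zig-zag sequencing of each iteration), and your pumping remark at the end pumps ``the wire'' but does not say what structure gets translated outside the fuzz band as a result, which is the actual source of the contradiction.

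If you want to keep your commitment-style framing, the repair is essentially to adopt the paper's strategy: make the copies of $t$ occur along an unboundedly long, geometrically one-dimensional path of tiles whose simulation must eventually repeat a window movie, then pump so that the east-going connection forced in one round reappears too far away. At that point your $\calS$ and argument will have converged to the paper's.
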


Figure~\ref{fig:imposs-overview} shows a high-level, schematic drawing of the system $\calS$ which cannot be simulated. Essentially, it grows a ``$\planter$'' module (similar to that of \cite{jCCSA}) to form an infinite assembly growing to the right, which initiates an infinite series of counters which grow upward to every height $\ge 4$. (Figure~\ref{fig:imposs-mid} shows the pattern of growth which allows $\calS$ to be a SASS.) Each counter then grows an arm down which crashes into a portion of the assembly below it, but since the arms grow longer and longer, eventually they reach a point where they must ``pump'', or grow in a periodic manner.  However, in order to correctly grow macrotiles which simulate the cooperative growth between the end of each arm and the bottom portion of the assembly, there must be path of tiles which can grow out from each arm. Since the arms must become periodic, those paths could also grow in higher locations, which leads to invalid simulation. Examples can be seen in Figures~\ref{fig:imposs-zoom} and \ref{fig:imposs-windows}.

\begin{figure}
    \centering
    \includegraphics[width=4.2in]{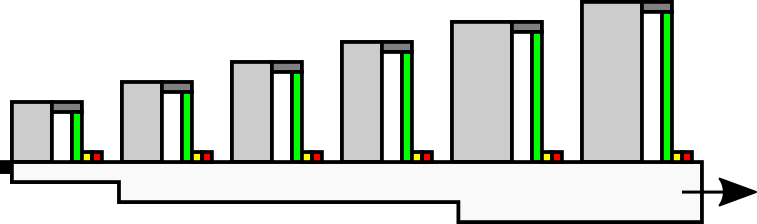}
    \caption{Overview of a temperature-2 aTAM system which cannot be simulated by any temperature-1 GTAM system at scale factor 1.}
    \label{fig:imposs-overview}
\end{figure}

\begin{figure}[htp]
\centering
    \subfloat{
        \label{fig:imposs-mid}
        \includegraphics[width=1.7in]{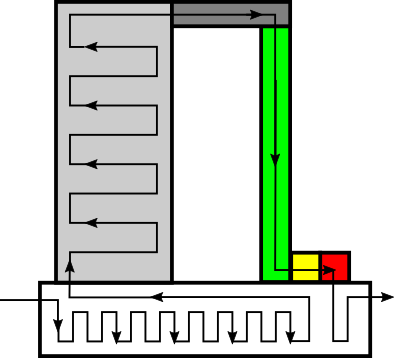}
        }
  \quad
  \subfloat{
        \label{fig:imposs-zoom}
        \includegraphics[width=1.2in]{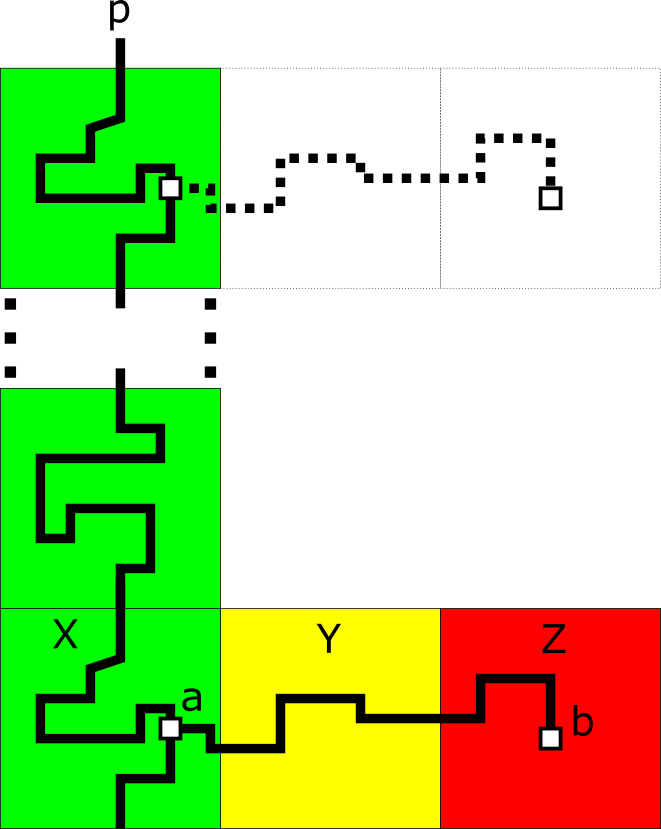}
        }
  \caption{(a) Depiction of one iteration of the growth of $\calT$, with arrows showing the ordering of growth, (b) Zoomed in portion of the construction shown in Figure~\ref{fig:imposs-overview} which shows (with a solid line) an example of a path of tiles bound by glues which must extend from a tile, $a$, in the supertile representing a green tile, to a tile, $b$, in the supertile representing the red tile. The dashed line shows how a previous copy of $a$ could allow growth of the same path in a higher location.}
  \label{fig:types_of_vertices}
\end{figure}

\begin{figure}
    \centering
    \includegraphics[width=4.8in]{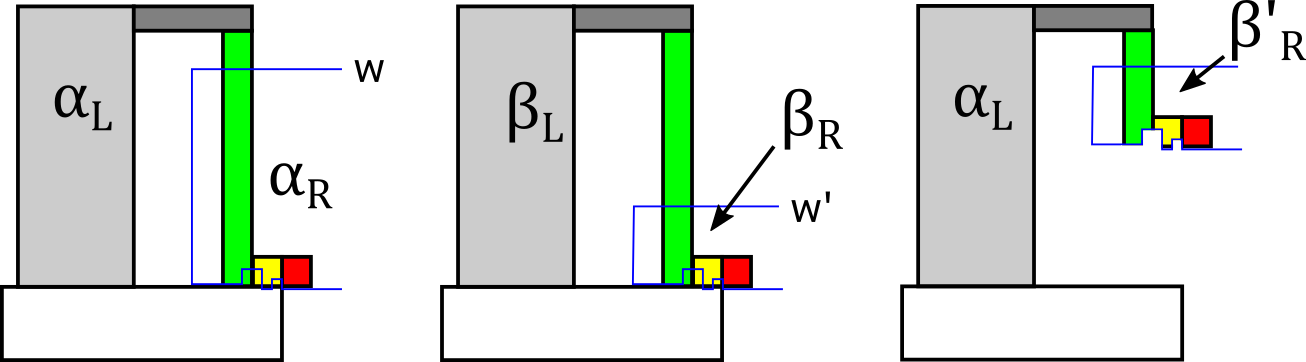}
    \caption{(left and middle) Examples of windows, $w$ and $w'$ which each cut a portion of the supertiles representing the green column, plus the yellow and red tiles, from the rest of an $\iteration$, and which have identical glue bindings across them. Note that glue bindings only occur across the top line of each, and the bottom line separating the inside from the $\planter$ goes only between unbound tiles, (right) Assembly $\alpha_L\beta_R'$ (where $\beta_R'$ is simply a translated copy of $\beta_R$) which must be able to form by the Window Movie Lemma. Even if the representation of the red tile isn't complete, the allowed boundary for the growth of fuzz is broken.}
    \label{fig:imposs-windows}
\end{figure}

While Theorem~\ref{thm:imposs} states that temperature-1 GTAM systems cannot even simulate the full class of directed temperature-2 aTAM single-assembly-sequence systems, the following result generalizes that to show that the same is true across all systems relying on weak cooperation across any tile assembly model.

\begin{theorem}\label{thm:imposs-general}
There exists a directed temperature-2 aTAM SASS $\calS$ that cannot be simulated by any weakly-cooperative tile assembly system that relies on geometric hindrance. 
\end{theorem}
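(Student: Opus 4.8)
The plan is to reuse the very same system $\calS$ constructed for Theorem~\ref{thm:imposs} and to observe that its impossibility proof never used anything about the GTAM beyond two abstract features: (i) the system runs at temperature~$1$, so a single glue bond of strength $\ge 1$ suffices for a tile to attach and no genuine glue cooperation is available; and (ii) whatever hindrance can block an otherwise-legal attachment is \emph{geometric}, i.e.\ it depends only on the fixed shapes of the tile types and on which region of the ambient space is already occupied by the current assembly. Feature~(ii) is exactly the definition of geometric (steric) hindrance recalled in the Cooperation subsection: a tile fails to attach at a location precisely when some of the space it would occupy is already occupied by part of another tile. So the strategy is to re-cast the proof of Theorem~\ref{thm:imposs} so that it invokes only (i) and (ii); Theorem~\ref{thm:imposs-general} then follows because every weakly-cooperative model relying on geometric hindrance (polyTAM, the polygonal TAM, the DaTAM with duples, and so on) satisfies both.

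First I would isolate the combinatorial heart of the Theorem~\ref{thm:imposs} argument. Suppose $\calU$ is any system of the allowed kind that simulates $\calS$ at some scale $m$. The cooperative attachment in $\calS$ at a ``crash site'', where the tip of a downward-growing arm meets the portion of the assembly below it, must be realized in $\calU$ by a connected path of tiles that grows out of the supertile representing the arm tip and reaches the supertile representing the tile below; by (i) this path is held together purely by single-strength glue bonds. Because the arms of $\calS$ grow unboundedly long while $\calU$ has only finitely many tile types, a pigeonhole / Window Movie Lemma argument (the one underlying Figure~\ref{fig:imposs-windows}) forces the $\calU$-growth along a sufficiently long arm to be spatially periodic: some nonzero vertical translation carries a ``window'' of the arm's $\calU$-assembly, together with the glue from which the crash-path emanates, to a strictly higher location that still lies in the interior of the eventual assembly but above the material into which the arm crashes.

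Next I would run the substitution step. In $\calS$-coordinates the translated copy of the window sits over empty space at the moment the original crash-path could first grow, so in $\calU$ the region of space the crash-path of tiles would occupy at the translated location is unoccupied. By (i) the first tile of the path attaches there using the same single glue; by (ii) — the locality of geometric hindrance — neither that tile nor any later tile of the path can be blocked, since the relevant space is empty and the tile shapes are the same fixed shapes that demonstrably fit at the original location. Hence the entire crash-path grows at the translated site, producing a producible assembly of $\calU$ whose image under the representation function contains a (partial) macrotile, or at least ``fuzz'', in a location where $\calS$ never places anything, contradicting the definition of simulation. Nothing in this chain mentions the GTAM, so it applies to every weakly-cooperative model with geometric hindrance.

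The step I expect to be the main obstacle is making the ``the translated region is empty enough'' claim airtight in full generality: in richer models a single glue may sit on a large or oddly shaped tile, and nearby parts of the assembly (other counters, other arms, later iterations of the planter) could in principle reach into the translated window. This is handled exactly as in the GTAM proof — by choosing the window far enough along the planter, and low enough on the arm, that the only assembly material incident to it is the arm window itself and the (absent) structure below — but one must additionally use that every target model fixes a finite tile set with uniformly bounded tile shapes, hence bounded reach, so that a clear corridor for the crash-path can always be guaranteed by choosing the scale-dependent constants (how long an arm, how far along the planter) appropriately. This is constant-chasing rather than a genuinely new difficulty, which is what makes the generalization go through.
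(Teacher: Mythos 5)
Your proposal is correct and follows essentially the same route as the paper: reuse the same $\calS$, observe that the Theorem~\ref{thm:imposs} argument only used finitely many glues/tile types, a fixed scale factor, temperature-$1$ (no glue cooperation), and the locality of geometric blocking, then re-run the Window Movie Lemma pumping step to produce a path that violates the fuzz boundary. The paper's proof makes the same observation, adding only the remark that non-square tiles (polyominoes, duples, polygons) may yield jagged macrotile boundaries but still finitely many possible window movies, which is the same point you fold into your final ``constant-chasing'' paragraph about uniformly bounded tile shapes.
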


The proof of Theorem~\ref{thm:imposs-general} is essentially identical to that of Theorem~\ref{thm:imposs}.

Note that Theorem~\ref{thm:imposs-general} is proven for weakly-cooperative systems using geometric hindrance, but that this does not include the second category of types of binding hindrance, namely systems which use glue repulsion. Although such systems make possible dynamic behavior in which portions of an assembly may break off, which may make the proof more difficult, we conjecture that they also cannot simulate $\calS$.

\bibliographystyle{splncs04}
\bibliography{tam,experimental_refs}

\clearpage

\section*{Technical Appendix}
This section contains technical details of definitions and proofs which are omitted from the main body due to space constraints.

\subsection{Simulation}
\label{sec:simulation_def_append}

This section contains a formal, rigorous definition of what it means for one tile assembly system to ``simulate'' another. Our definitions come from \cite{IUNeedsCoop}, but we make slight modifications to account for the simulation of geometric tiles. Also, note that a great amount of the complexity required for the definitions arises due to the possible dynamics of simulations with scale factors $> 1$, and that otherwise the mapping of assemblies and equivalence of production and dynamics are much more straightforward.


From this point on, let $T$ be a tile set, and let $m\in\Z^+$.
An \emph{$m$-block supertile} over $T$ is a partial function $\alpha : \Z_m^2 \dashrightarrow T$, where $\Z_m = \{0,1,\ldots,m-1\}$.
Let $B^T_m$ be the set of all $m$-block supertiles over $T$.
The $m$-block with no domain is said to be $\emph{empty}$.
If $T$ consists of square tiles, for a general assembly $\alpha:\Z^2 \dashrightarrow T$ and $(x,y)\in\Z^2$, define $\alpha^m_{x,y}$ to be the $m$-block supertile defined by $\alpha^m_{x,y}(i_x, i_y) = \alpha(mx+i_x, my+i_y)$ for $0 \leq i_x,i_y< m$. If $T$ consists of geometric tiles, then additional space is used to represent the geometry regions of tiles. See Figures~\ref{fig:GTAM-grid} and \ref{fig:GTAM-macrotiles} for a depiction, and instead define $\alpha^m_{x,y}$ to be the $m$-block supertile defined by $\alpha^m_{x,y}(i_x, i_y) = \alpha(mx+nx+i_x, my+ny+i_y)$ for $0 \leq i_0,i_1< m$.
For some tile set $S$, a partial function $R: B^{S}_m \dashrightarrow T$ is said to be a \emph{valid $m$-block supertile representation} from $S$ to $T$ if for any $\alpha,\beta \in B^{S}_m$ such that $\alpha \sqsubseteq \beta$ and $\alpha \in \dom R$, then $R(\alpha) = R(\beta)$.

\begin{figure}
    \centering
    \includegraphics[width=1.5in]{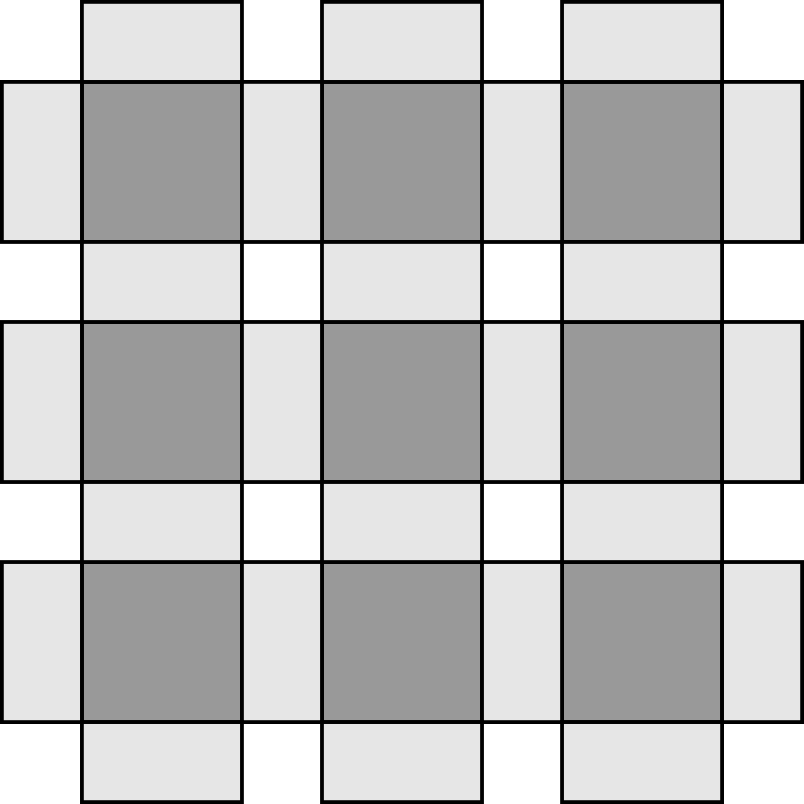}
    \caption{The grid formed by GTAM tiles, with the tile bodies shown in dark grey and the geometry regions, which overlap for adjacent tiles, shown in light grey.}
    \label{fig:GTAM-grid}
\end{figure}

\begin{figure}
    \centering
    \includegraphics[width=1.5in]{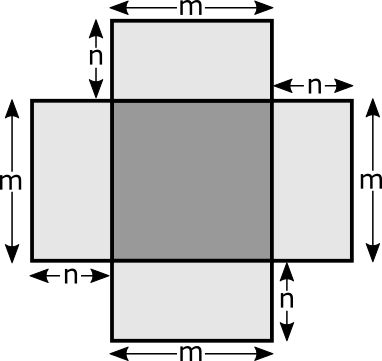}
    \caption{A macrotile for the simulation of a GTAM tile. The tile body is represented by an $m \times m$ square and the geometry regions by $n \times n$ rectangles.}
    \label{fig:GTAM-macrotiles}
\end{figure}

For a given valid $m$-block supertile representation function $R$ from tile set~$S$ to tile set $T$, define the \emph{assembly representation function}\footnote{Note that $R^*$ is a total function since every assembly of $S$ represents \emph{some} assembly of~$T$; the functions $R$ and $\alpha$ are partial to allow undefined points to represent empty space.}  $R^*: \mathcal{A}^{S} \rightarrow \mathcal{A}^T$ such that $R^*(\alpha') = \alpha$ if and only if $\alpha(x,y) = R\left(\alpha'^m_{x,y}\right)$ for all $(x,y) \in \Z^2$.
For an assembly $\alpha' \in \mathcal{A}^{S}$ such that $R(\alpha') = \alpha$, $\alpha'$ is said to map \emph{cleanly} to $\alpha \in \mathcal{A}^T$ under $R^*$ if for all non empty blocks $\alpha'^m_{x,y}$, $(x,y)+(u_x,u_y) \in \dom \alpha$ for some $u_x,u_y \in \{0,1\}$ such that $u_x^2 + u_y^2 \leq 1$, or if $\alpha'$ has at most one non-empty $m$-block~$\alpha^m_{0, 0}$. In other words, $\alpha'$ may have tiles on supertile blocks representing empty space in $\alpha$, but only if that position is adjacent to a tile in $\alpha$.  We call such growth ``around the edges'' of $\alpha'$ \emph{fuzz} and thus restrict it to be adjacent to only valid supertiles, but not diagonally adjacent (i.e.\ we do not permit \emph{diagonal fuzz}). If $T$ consists of geometric tiles, the fuzz may also be in the regions around the macrotiles which represent the geometry regions. 


In the following definitions, let $\mathcal{T} = \left(T,\sigma_T,\tau_T\right)$ be a tile assembly system, let $\mathcal{S} = \left(S,\sigma_S,\tau_S\right)$ be a tile assembly system, and let $R$ be an $m$-block representation function $R:B^S_m \rightarrow T$.

\begin{definition}
\label{def-equiv-prod} We say that $\mathcal{S}$ and $\mathcal{T}$ have \emph{equivalent productions} (under $R$), and we write $\mathcal{S} \Leftrightarrow \mathcal{T}$ if the following conditions hold:
\begin{enumerate}
        \item $\left\{R^*(\alpha') | \alpha' \in \prodasm{\mathcal{S}}\right\} = \prodasm{\mathcal{T}}$.
        \item $\left\{R^*(\alpha') | \alpha' \in \termasm{\mathcal{S}}\right\} = \termasm{\mathcal{T}}$.
        \item For all $\alpha'\in \prodasm{\mathcal{S}}$, $\alpha'$ maps cleanly to $R^*(\alpha')$.
\end{enumerate}
\end{definition}

\begin{definition}
\label{def-t-follows-s} We say that $\mathcal{T}$ \emph{follows} $\mathcal{S}$ (under $R$), and we write $\mathcal{T} \dashv_R \mathcal{S}$ if $\alpha' \rightarrow^\mathcal{S} \beta'$, for some $\alpha',\beta' \in \prodasm{\mathcal{S}}$, implies that $R^*(\alpha') \to^\mathcal{T} R^*(\beta')$.
\end{definition}

\begin{definition}
\label{def-s-models-t} We say that $\mathcal{S}$ \emph{models} $\mathcal{T}$ (under $R$), and we write $\mathcal{S} \models_R \mathcal{T}$, if for every $\alpha \in \prodasm{\mathcal{T}}$, there exists $\Pi \subset \prodasm{\mathcal{S}}$ where $R^*(\alpha') = \alpha$ for all $\alpha' \in \Pi$, such that, for every $\beta \in \prodasm{\mathcal{T}}$ where $\alpha \rightarrow^\mathcal{T} \beta$, (1) for every $\alpha' \in \Pi$ there exists $\beta' \in \prodasm{\mathcal{S}}$ where $R^*(\beta') = \beta$ and $\alpha' \rightarrow^\mathcal{S} \beta'$, and (2) for every $\alpha'' \in \prodasm{\mathcal{S}}$ where $\alpha'' \rightarrow^\mathcal{S} \beta'$, $\beta' \in \prodasm{\mathcal{S}}$, $R^*(\alpha'') = \alpha$, and $R^*(\beta') = \beta$, there exists $\alpha' \in \Pi$ such that $\alpha' \rightarrow^\mathcal{S} \alpha''$.
\end{definition}

The previous definition essentially specifies that every time $\mathcal{S}$ simulates an assembly $\alpha \in \prodasm{\mathcal{T}}$, there must be at least one valid growth path in $\mathcal{S}$ for each of the possible next steps that $\mathcal{T}$ could make from $\alpha$ which results in an assembly in $\mathcal{S}$ that maps to that next step.

\begin{definition}
\label{def-s-simulates-t} We say that $\mathcal{S}$ \emph{simulates} $\mathcal{T}$ (under $R$) if $\mathcal{S} \Leftrightarrow_R \mathcal{T}$ (equivalent productions), $\mathcal{T} \dashv_R \mathcal{S}$ and $\mathcal{S} \models_R \mathcal{T}$ (equivalent dynamics).
\end{definition}

\subsection{Proof of Theorem \ref{thm:atam-sim}} \label{sec:proof-atam-sim}

\begin{proof} To prove Theorem~\ref{thm:atam-sim}, we will begin with an arbitrary temperature-1 aTAM system $\calT=(\T, \sigma_\calT, G_\calT, 1)$ which has a symmetric glue function, and we will construct a GTAM system $\calU = (U, \sigma_{\calU}, 1)$ with the necessary constraints which simulates it at scale factor 1. The first step in this construction will consist of defining a number of geometries whose compatibility behaviour will emulate the binding behaviour of the glues in $\calT$. For purposes which will become clear later in the proof, we will define two geometries in $\calU$ for each glue in $\calT$. Let $\{g_1, \ldots, g_n\}$ be an enumeration of the glues used in $\calT$ with $n$ being the number of glues. For each glue $g_i$ in this enumeration, we will construct two geometries, $\gamma_{i\alpha}$ and $\gamma_{i\beta}$, both of size $4n$. These geometries will be divided into four contiguous domains of size $n$. Figure \ref{fig:glues2geom} provides an illustration of how the domains are laid out. The two outermost domains, labelled $\alpha_1$ and $\alpha_2$ from left to right, form what will be called the $\alpha$-domains; likewise, the two inner domains, $\beta_1$ and $\beta_2$, make up the $\beta$-domains. It's important to notice that, when lying on the abutting faces of two adjacent tiles (i.e. North and South, or East and West), the geometries would be reversed from each other, meaning that the $\alpha_1$ domain of one geometry would be overlapping the $\alpha_2$ domain of the other, and the $\beta_1$ domain of one would be overlapping the $\beta_2$ domain of the other. For this reason, when indexing the bumps from $1$ to $n$ in each domain, we will number the bump locations in domains $\alpha_1$ and $\beta_1$ from left to right and in domains $\alpha_2$ and $\beta_2$ from right to left. Thus, in figure \ref{fig:glues2geom}, for example, geometry $\gamma_{1\alpha}$ has a bump in location 1 in domain $\alpha_1$ and two bumps in domain $\alpha_2$ in locations 1 and 3.

Each domain in a geometry serves a purpose. Domains $\alpha_1$ and $\beta_1$ are domains whose bumps indicate the index of the glue being represented. For example, when representing glue $g_i$, the geometry $\gamma_{i\alpha}$ will have a single bump at location $i$ in domain $\alpha_1$ and the geometry $\gamma_{i\beta}$ will have a single bump at location $i$ in domain $\beta_1$. The domains $\alpha_2$ and $\beta_2$ are used to functionally indicate the binding behaviour of the glue being represented. A bump in either of these domains encodes the index of a glue which cannot bind with the glue being represented. For example, in figure \ref{fig:glues2geom}, the glue function $G$, represented as a matrix, indicates that glue $1$ cannot bind with either glue $1$ or $3$, thus the geometries corresponding to $g_1$, namely $\gamma_{1\alpha}$ and $\gamma_{1\beta}$, have bumps in domains $\alpha_2$ and $\beta_2$ respectively at locations $1$ and $3$. Notice that, because this glue has no strength with itself, if either of these geometries appeared on the abutting faces of two tiles, those tiles would be incompatible, since a bump at location 1 in domain $\alpha_1$ would intersect with a bump at location 1 in domain $\alpha_2$ and likewise for the $\beta$-domains.

Explicitly, for each glue $g_i$ in $\calT$, our construction of the geometries $\gamma_{i\alpha}$ and $\gamma_{i\beta}$ is as follows. Let both the domain $\alpha_1$ in $\gamma_{i\alpha}$ and the domain $\beta_1$ in $\gamma_{i\beta}$ contain a single bump at location $i$. Then for each $j$ such that $G_\calT(i, j)=0$, i.e. glues $i$ and $j$ do not bind with any strength, let domain $\alpha_2$ in $\gamma_{i\alpha}$ and domain $\beta_2$ in $\gamma_{i\beta}$ have a bump at location $j$, remembering to count from right to left in these domains. Everywhere else will contain no bumps in either geometry. Here, it's important to note that the $\alpha$-domains of $\gamma_{i\alpha}$ are identical to the $\beta$-domains of $\gamma_{i\beta}$ and neither contains any bumps in the opposite type domains. It may seem as though these geometries are redundant; however, as will be explained later, they become important when trying to simulate certain behaviours of some tile sets. Furthermore, it's not difficult to see that exactly when two glues in $\calT$ can bind, as per the glue function $G_\calT$, the corresponding geometries of a single type, either $\alpha$ or $\beta$, will be compatible with each other. Also notice that $\alpha$ and $\beta$ geometries will always be compatible with each other no matter the corresponding glues since they use non-overlapping domains.

Now let $g^\alpha$ and $g^\beta$ be the $2$ glues to be used in $\calU$, with the property that each binds to itself with strength 1 and not the other. For each tile $t_i \in \T$, we will construct 16 different tiles $u_i^1,\ldots,u_i^{16}$ in $\U$. Each of these 16 tiles will correspond to the single original tile for purposes of simulation, but, as will be seen later, multiple corresponding geometric tile types are necessary in order to allow for adjacent tiles to have mismatching glues, as can happen in situations like that depicted in figure \ref{fig:1glueIssue}. If $g_N$ is the glue that appears on the north face of $t_i$, then there are two corresponding geometries that can go on the north face of a corresponding geometric tile in $\U$, namely $\gamma_{N\alpha}$ and $\gamma_{N\beta}$. Since this is true for all four glues of $t_i$, there are $2^4=16$ possibilities for constructing geometric tiles that correspond to $t_i$. Moreover since each side of a geometric tile type needs a glue as well as a geometry, we assign the corresponding glue, $g^\alpha$ or $g^\beta$, to a side depending on whether it has the $\alpha$ or $\beta$ version of the geometry. These $16|\T|$ geometric tile types make up the tile set $\U$.

Because temperature-1 does not allow for cooperative binding between tiles, there are only a few cases that need to be considered to see that $\calU$ has equivalent dynamics to $\calT$. Consider the situation in which there is a tile $t_0$ which has a glue, say $g_0$, on its north face, north chosen without loss of generality, that admits the growth of one or more tiles, say $t_1,\ldots,t_m$, in $\calT$. Because $g_0$ is responsible for the growth of these tiles, they must have glues on their south faces which can bind with strength 1 to $g_0$. Given that $\calT$ is arbitrary, let these be called $g_1,\ldots,g_k$. Now consider the situation in $\calU$. We can assume that a tile $u_0$, corresponding to $t_0$, has already been placed and that the geometry on its north face is either $\gamma_{0\alpha}$ or $\gamma_{0\beta}$. Without loss of generality, we can assume $\gamma_{0\alpha}$, thus we know, by definition, that the glue on the north face of $u_0$ is $g^\alpha$. Because $u_0$ is in a temperature-1 GTAM system, growth happens exactly when there is a glue match and a compatible geometry. We know that the only geometries that are compatible with $\gamma_{0\alpha}$ are the $\alpha$ versions of those which correspond to $g_1,\ldots,g_k$, as this is how the geometries were defined, and any $\beta$ version geometry; however, because $u_0$ has glue $g^\alpha$ on its north face, only the $\alpha$ version geometries will be able to grow since any tile face with a $\beta$ version geometry will have glue $g^\beta$.

Therefore, the tile that grows to the north of $u_0$ will be a tile with an $\alpha$ geometry corresponding to one of the glues $g_1,\ldots,g_k$ on its south face. By definition, these are the tiles corresponding to $t_1,\ldots,t_m$ or, more precisely, the half of the corresponding geometric tiles with an $\alpha$ type geometry on their south side. If there is no other tile adjacent to this tile being placed, then nothing more needs to be considered, and a tile corresponding to a valid tile in $\calT$ can be placed north of $u_0$. However, when there is an adjacent tile to the tile being placed, as in the case of figure \ref{fig:1glueIssue}, then additional consideration is necessary. This is the reason why 16 tiles corresponding to each tile in $\calT$ were necessary. Consider the situation in figure \ref{fig:1glueIssue} where the tile with a single green glue represents $t_0$ and the tile with the single red glue represents a tile adjacent to the tile about to be placed. Even supposing that red glues mismatch with blue glues, the tile with a green and blue glue should be a valid tile to place to the north of $t_0$ since it matches at least one glue in $\calT$. In $\calU$ however, if we only had geometries of one type, the geometries corresponding to the red and blue glues would be incompatible and not allow the tile to be placed. Having a second glue type however, allows us to have a version of the tile which can be placed north of $u_0$ and mismatch on the left. Since the $\alpha$ and $\beta$ type geometries are always compatible with each other, such mismatched glues can always be placed adjacent to each other even though their glues don't match.

To complete the construction, given $\sigma_\calT$, it is straightforward to construct $\sigma_\calU$. For each tile in $\sigma_\calT$ simply replace it with a corresponding tile in $\calU$. Since there are 16 such tiles for each tile in $\sigma_\calT$, start with the tile using all $\alpha$ type geometries. Then, if there are any mismatches between adjacent geometries, choose, arbitrarily, one of the mismatched geometries to be a $\beta$ type. Since this does not affect the matching of any other geometries and since it fixes the problem for the two mismatched geometries, this procedure can be used to fix all mismatched geometries until there are none in $\sigma_\calU$. 

Finally, to show that $\calU$ simulates $\calT$, we first note that for each tile type $t_i \in T$, the representation function $R$ simply maps each of the $16$ tile types in $U$ which were explicitly created to represent it (with each permutation of $\alpha$ and $\beta$ sides) to $t_i$. It is then clear that $R(\sigma_\calU) = \sigma_\calT$, and for every tile addition which is possible to $\sigma_\calT$, and then to any producible assembly $\alpha \in \prodasm{\calT}$, tiles of exactly one of the $16$ tile types in $U$ which map to that tile can be added to the corresponding $\beta \in \prodasm{\calU}$ where $R(\beta) = \alpha$, and no other attachments are possible to $\beta$. Therefore, $\calU$ and $\calT$ have equivalent productions and dynamics, and $\calU$ simulates $\calT$. This simulation is done at scale factor 1, and $U$ has only $2$ distinct glues and geometries of size $4n$.
\qed
\end{proof}

\subsection{Proof of Theorem \ref{thm:duple-sim}} \label{sec:proof-duple-sim}

\begin{proof}
Given a DaTAM system $\calD$, the construction of a GTAM system $\calS$ which simulates it is very similar to the aTAM construction above with a few key differences. Since in the GTAM we are only allowed tiles that are unit squares (with small geometries on their sides), we will create unique tile types for each half of a duple tile type in $\calD$. Then, we include two bumps to the left and right of our four domains. These bumps are called flag bumps and they are used to distinguish between the geometry that represents normal glues and the geometry that adjoins the two halves of a duple tile. Second, since the DaTAM only has glues that bind with themselves, or in other words, only uses diagonal glue functions, the $\alpha_2$ or $\beta_2$ domain, whichever is not empty, of any geometry will only contain a single gap in its bumps corresponding to the bump in its first domain. This simplifies the construction a bit and allows us to focus on the interesting aspect of the DaTAM: the duples. Since the GTAM does not have the capacity for duple tiles we have to simulate the properties of duple tiles using geometry. The property of duple tiles we have to be particularly careful in emulating is the duple tile's ability to block other tiles from growing a tile away from where the duple itself attached to the system. This means that as soon as either half of a duple tile attaches to our assembly, it must be impossible for any tile other than the opposite half to grow into the adjacent space. Fortunately, geometry is capable of this.

When simulating a DaTAM system, the flag bump to the left of the $\alpha_1$ domain represents that the geometry corresponds to a normal glue in the DaTAM system. The flag bump to the right of the $\alpha_2$ domain represents that the geometry corresponds to the center of a duple tile. These flags do more than just distinguish geometry types, however. Notice that a geometry containing the duple flag bump will be incompatible with any geometry containing the normal flag bump. This means that even when only half of a duple has grown in, any normal tile will not be able to grow where the second half should because their geometries will be incompatible.

Let $G = \{g_1,\ldots,g_n\}$ be some enumeration of the glues in $\calD$ with $n$ being $|G|$ and let $D = \{d_1,\ldots,d_m\}$ be an enumeration of all duple tile types in $\calD$ with $m=|D|$. Now let $k = \max(m, n)$. For each $g_i$ in $G$, we construct two geometries $\gamma_{i\alpha}$ and $\gamma_{i\beta}$ of size $4k + 2$ in the exact same way as in the aTAM simulation construction above using four domains of size $k$, but with the addition of two bumps locations representing the flags on the outside of the geometry (We use $k$ instead of $n$ for the size of the domains because we will need to be able to represent all of the duple tiles as well). Because these geometries correspond to normal glues, the leftmost flag, the one corresponding to normal tiles, will be in place, meaning that there will be a bump in this first position. See figure \ref{fig:duple_geometry} for an example of what such a geometry might look like. For tiles which have no glue, or equivalently a glue of strength 0, on a side in $\calD$, we construct a geometry $\gamma_0$ which contains only the normal flag bump and is thus compatible with all other normal glue geometries. Furthermore, for each $d_i\in D$, we construct a geometry $\lambda_i$, also of size $4k+2$, as follows. The geometry will be divided into the same domains and flags as for the normal tiles; however, the leftmost normal flag bump will be missing and the rightmost duple flag bump will be present. The $\alpha_1$ domain will correspond to the index of the duple $i$ and the $\alpha_2$ domain will contain bumps in every position except for the one corresponding to $i$ counting backwards. Since such a glue will only appear in between the two halves corresponding to a duple tile, we don't have to make a $\beta$ version of this glue as there cannot be any mismatches. Also notice that even though there might be a geometry corresponding to a glue and a geometry corresponding to a duple with the same index, they will be incompatible geometries since the flag bumps do not match.

\begin{figure}
    \begin{center}
        \begin{tikzpicture}[x=0.3cm, y=0.3cm]
            \pgfmathsetmacro{\x}{0}
            \pgfmathsetmacro{\y}{0}
            
            \fill[fill=gray!70!white] (\x,\y) -- (\x,\y+5.4) -- (\x+1,\y+5.4) -- (\x+1,\y+1) -- (\x+3,\y+1) -- (\x+3,\y+3) -- (\x+5,\y+3) -- (\x+5,\y+1) -- (\x+5.4,\y+1) -- (\x+5.4,\y) -- (\x,\y);
            \fill[fill=green!50!white] (\x+1,\y+3.4) -- (\x+0.6,\y+3.4) -- (\x+0.6,\y+4.6) -- (\x+1,\y+4.6) -- (\x+1,\y+3.4);
            \fill[fill=yellow!50!white] (\x+3.4,\y+3) -- (\x+3.4,\y+2.6) -- (\x+4.6,\y+2.6) -- (\x+4.6,\y+3) -- (\x+3.4,\y+3);
            \draw (\x+1,\y+5.4) -- (\x+1,\y+1) -- (\x+3,\y+1) -- (\x+3,\y+3) -- (\x+5,\y+3) -- (\x+5,\y+1) -- (\x+5.4,\y+1);
            
            \pgfmathsetmacro{\x}{8}
            \pgfmathsetmacro{\y}{-2}
            
            \fill[fill=gray!70!white] (\x,\y) -- (\x,\y+5.4) -- (\x+1,\y+5.4) -- (\x+1,\y+1) -- (\x+3,\y+1) -- (\x+3,\y+3) -- (\x+5,\y+3) -- (\x+5,\y+1) -- (\x+5.4,\y+1) -- (\x+5.4,\y) -- (\x,\y);
            \fill[fill=green!50!white] (\x+1,\y+3.4) -- (\x+0.6,\y+3.4) -- (\x+0.6,\y+4.6) -- (\x+1,\y+4.6) -- (\x+1,\y+3.4);
            \fill[fill=yellow!50!white] (\x+3.4,\y+3) -- (\x+3.4,\y+2.6) -- (\x+4.6,\y+2.6) -- (\x+4.6,\y+3) -- (\x+3.4,\y+3);
            \draw (\x+1,\y+5.4) -- (\x+1,\y+1) -- (\x+3,\y+1) -- (\x+3,\y+3) -- (\x+5,\y+3) -- (\x+5,\y+1) -- (\x+5.4,\y+1);
            
            \fill[fill=blue!50!white] (\x+3,\y+3) -- (\x+3,\y+5) -- (\x+5,\y+5) -- (\x+5,\y+3) -- (\x+3,\y+3);
            \fill[fill=yellow!50!white] (\x+3.4,\y+3) -- (\x+3.4,\y+3.4) -- (\x+4.6,\y+3.4) -- (\x+4.6,\y+3) -- (\x+3.4,\y+3);
            \draw (\x+3,\y+3) -- (\x+3,\y+5) -- (\x+5,\y+5) -- (\x+5,\y+3) -- (\x+3,\y+3);
            
            \pgfmathsetmacro{\x}{8}
            \pgfmathsetmacro{\y}{4}
            
            \fill[fill=gray!70!white] (\x,\y) -- (\x,\y+5.4) -- (\x+1,\y+5.4) -- (\x+1,\y+1) -- (\x+3,\y+1) -- (\x+3,\y+3) -- (\x+5,\y+3) -- (\x+5,\y+1) -- (\x+5.4,\y+1) -- (\x+5.4,\y) -- (\x,\y);
            \fill[fill=green!50!white] (\x+1,\y+3.4) -- (\x+0.6,\y+3.4) -- (\x+0.6,\y+4.6) -- (\x+1,\y+4.6) -- (\x+1,\y+3.4);
            \fill[fill=yellow!50!white] (\x+3.4,\y+3) -- (\x+3.4,\y+2.6) -- (\x+4.6,\y+2.6) -- (\x+4.6,\y+3) -- (\x+3.4,\y+3);
            \draw (\x+1,\y+5.4) -- (\x+1,\y+1) -- (\x+3,\y+1) -- (\x+3,\y+3) -- (\x+5,\y+3) -- (\x+5,\y+1) -- (\x+5.4,\y+1);
            
            \fill[fill=red!50!white] (\x+1,\y+3) -- (\x+1,\y+5) -- (\x+5,\y+5) -- (\x+5,\y+3) -- (\x+1,\y+3);
            \fill[fill=green!50!white] (\x+1,\y+3.4) -- (\x+1.4,\y+3.4) -- (\x+1.4,\y+4.6) -- (\x+1,\y+4.6) -- (\x+1,\y+3.4);
            \draw (\x+1,\y+3) -- (\x+1,\y+5) -- (\x+5,\y+5) -- (\x+5,\y+3) -- (\x+1,\y+3);
            
            \draw[draw=red, line width=0.5mm, -latex] (5.5,2) -- (7.5,1);
            \draw[draw=red, line width=0.5mm, -latex] (5.5,4) -- (7.5,5);

            \pgfmathsetmacro{\x}{16}
            \pgfmathsetmacro{\y}{0}
            
            \fill[fill=gray!70!white] (\x,\y) -- (\x,\y+5.4) -- (\x+1,\y+5.4) -- (\x+1,\y+1) -- (\x+3,\y+1) -- (\x+3,\y+3) -- (\x+5,\y+3) -- (\x+5,\y+1) -- (\x+5.4,\y+1) -- (\x+5.4,\y) -- (\x,\y);
            \draw (\x+1,\y+5.4) -- (\x+1,\y+1) -- (\x+3,\y+1) -- (\x+3,\y+3) -- (\x+5,\y+3) -- (\x+5,\y+1) -- (\x+5.4,\y+1);
            
            \filldraw[fill=green!50!white] (\x+1,\y+4.8) -- (\x+1.2,\y+4.8) -- (\x+1.2,\y+4.6) -- (\x+1,\y+4.6) -- (\x+1,\y+4.8);
            \filldraw[fill=yellow!50!white] (\x+3.2,\y+3) -- (\x+3.2,\y+3.2) -- (\x+3.4,\y+3.2) -- (\x+3.4,\y+3) -- (\x+3.2,\y+3);
            
            \pgfmathsetmacro{\x}{24}
            \pgfmathsetmacro{\y}{-2}
            
            \fill[fill=gray!70!white] (\x,\y) -- (\x,\y+5.4) -- (\x+1,\y+5.4) -- (\x+1,\y+1) -- (\x+3,\y+1) -- (\x+3,\y+3) -- (\x+5,\y+3) -- (\x+5,\y+1) -- (\x+5.4,\y+1) -- (\x+5.4,\y) -- (\x,\y);
            \draw (\x+1,\y+5.4) -- (\x+1,\y+1) -- (\x+3,\y+1) -- (\x+3,\y+3) -- (\x+5,\y+3) -- (\x+5,\y+1) -- (\x+5.4,\y+1);
            
            \filldraw[fill=green!50!white] (\x+1,\y+4.8) -- (\x+1.2,\y+4.8) -- (\x+1.2,\y+4.6) -- (\x+1,\y+4.6) -- (\x+1,\y+4.8);
            \filldraw[fill=blue!50!white] (\x+3,\y+3.2) -- (\x+2.8,\y+3.2) -- (\x+2.8,\y+3.4) -- (\x+3,\y+3.4) -- (\x+3,\y+3.2);
            
            \filldraw[fill=blue!50!white] (\x+3,\y+3) -- (\x+3,\y+5) -- (\x+5,\y+5) -- (\x+5,\y+3) -- (\x+3,\y+3);
            
            \pgfmathsetmacro{\x}{24}
            \pgfmathsetmacro{\y}{4}
            
            \fill[fill=gray!70!white] (\x,\y) -- (\x,\y+5.4) -- (\x+1,\y+5.4) -- (\x+1,\y+1) -- (\x+3,\y+1) -- (\x+3,\y+3) -- (\x+5,\y+3) -- (\x+5,\y+1) -- (\x+5.4,\y+1) -- (\x+5.4,\y) -- (\x,\y);
            \draw (\x+1,\y+5.4) -- (\x+1,\y+1) -- (\x+3,\y+1) -- (\x+3,\y+3) -- (\x+5,\y+3) -- (\x+5,\y+1) -- (\x+5.4,\y+1);
            
            \filldraw[fill=red!50!white] (\x+3,\y+3.2) -- (\x+3.2,\y+3.2) -- (\x+3.2,\y+3.4) -- (\x+3,\y+3.4) -- (\x+3,\y+3.2);
            \filldraw[fill=yellow!50!white] (\x+3.2,\y+3) -- (\x+3.2,\y+3.2) -- (\x+3.4,\y+3.2) -- (\x+3.4,\y+3) -- (\x+3.2,\y+3);
            
            \filldraw[fill=red!50!white] (\x+1,\y+3) -- (\x+1,\y+5) -- (\x+3,\y+5) -- (\x+3,\y+3) -- (\x+1,\y+3);
            
            \draw[draw=red, line width=0.5mm, -latex] (21.5,2) -- (23.5,1);
            \draw[draw=red, line width=0.5mm, -latex] (21.5,4) -- (23.5,5);
            
            \pgfmathsetmacro{\x}{32}
            \pgfmathsetmacro{\y}{4}
            
            \fill[fill=gray!70!white] (\x,\y) -- (\x,\y+5.4) -- (\x+1,\y+5.4) -- (\x+1,\y+1) -- (\x+3,\y+1) -- (\x+3,\y+3) -- (\x+5,\y+3) -- (\x+5,\y+1) -- (\x+5.4,\y+1) -- (\x+5.4,\y) -- (\x,\y);
            \draw (\x+1,\y+5.4) -- (\x+1,\y+1) -- (\x+3,\y+1) -- (\x+3,\y+3) -- (\x+5,\y+3) -- (\x+5,\y+1) -- (\x+5.4,\y+1);
            
            \filldraw[fill=red!50!white] (\x+1,\y+3) -- (\x+1,\y+5) -- (\x+3,\y+5) -- (\x+3,\y+3) -- (\x+1,\y+3);
            \filldraw[fill=red!50!white] (\x+3,\y+3) -- (\x+3,\y+5) -- (\x+5,\y+5) -- (\x+5,\y+3) -- (\x+3,\y+3);
            \draw[draw=red, line width=0.5mm, -latex] (29.5,7) -- (31.5,7);
            
            \draw (15,-4) -- (15,10);
            
        \end{tikzpicture}
    \end{center}
    \caption{(left) example of a situation in which the growth of a duple tile can interfere with the growth of other tiles even at a distance from where the tile grew. (right) This same behaviour implemented using geometric blocking. Notice that even though only half of the duple can grow at any time, the geometry presented by one half of the duple is designed to be incompatible with the normal blue tile geometry so only the remaining half can grow in that location. Also notice that the geometry presented when the blue tile grows is incompatible with the first half of the duple.}
    \label{fig:duple-interact}
\end{figure}
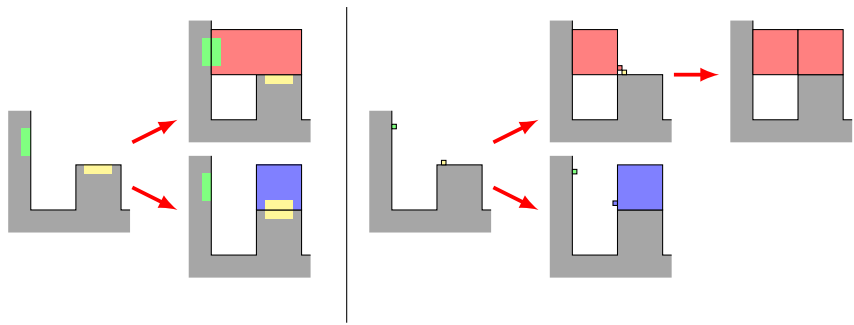

The tiles in $\calS$ are constructed exactly like in the aTAM simulation construction; there will be $16$ tiles corresponding to each non-duple tile $t$ in $\calD$ with each glue of $t$ being replaced by on of the two corresponding geometries and glues. The duple tiles are constructed similarly. For each duple tile $d$ in $\calD$ there will be $16$ corresponding tiles in $\calS$, with $8$ tiles in $\calS$ representing each half of the duple. The reason there are $8$ corresponding tiles for each half is because one of the glues, namely the one that represents the center of the duple, only has an $\alpha$ version. The non-center glues are replaced by one of the two corresponding geometries just like the normal tiles. Furthermore, we construct the seed configuration identically to the aTAM simulation construction, noting that the glue between two duple tile halves will never mismatch.

We've already demonstrated that the dynamics of the normal tiles will be simulated by $\calS$ since they are identical to the previous construction. It's also straightforward to see that if either half of a duple tile attaches, no normal tile will ever be able to grow into the location corresponding to the other half since the normal flag and duple flag are incompatible bumps. Furthermore, only the correct second half of a duple will grow after the first one since each of the duple geometries are only compatible with themselves and each half of the duple has the duple glue on an opposite face. Lastly, if any tile where to grow into a location that would belong to the second half of a duple tile before the first half grew, it would be impossible for the first half of the duple to grow since the geometries would mismatch. This behaviour emulates that of the duple system and therefore $\calS$ simulates $\calD$. \qed
\end{proof}

\subsection{Proof of Theorem~\ref{thm:imposs}}

\begin{proof}To prove Theorem~\ref{thm:imposs}, we first present the details of the aTAM SASS $\calS$, and then explain why no GTAM system at temperature-1 can simulate it.
The temperature-2 aTAM SASS system $\calS$ grows as follows. (Reference Figure~\ref{fig:imposs-overview} for a graphical depiction.) From the seed tile (shown as a black square on the far left), a module called the $\planter$ (similar to the $\planter$ module introduced in \cite{jCCSA}) grows horizontally to the right (white portion). At a high level, the $\planter$ is a binary counter which counts from $4$ to infinity, with spacing between each increment that allows the bits of the current binary number to be rotated to the north, as well as $6$ additional spaces. At each location where the bits of a binary number $b$ are exposed to the north, a binary $\decrementer$ grows upward (light grey rectangles in Figure~\ref{fig:imposs-overview}) to a height of $2b$ by having each successive pair of rows decrement the values from $b-1$ to $0$, then check to see if $0$ has been reached. Once the $\decrementer$ reaches $0$ and terminates upward growth, a row of tiles grows east from its top row, a distance of $4$ tiles. Then, from the easternmost tile of that row, a column of tiles (green in Figure~\ref{fig:imposs-overview}) grows downward until being blocked by the $\planter$ (shown in more detail in Figure~\ref{fig:imposs-mid}). At that point, cooperation between the final green tile and a tile in the $\planter$ allow for placement of the yellow tile, which then exposes a strength-2 glue allowing the red tile to attach. We call the growth of a $\decrementer$ through the placement of its associated red tile an $\iteration$, and note that the infinite terminal assembly of $\calS$ grows exactly one $\iteration$ for every number $\ge 4$. (Note that the counter within the $\planter$, the $\decrementer$ module, etc. all use standard, basic aTAM modules.)



Note that, as depicted in Figure~\ref{fig:imposs-mid}, the growth of the modules of an $\iteration$ in general follows a zig-zag pattern which allows the growth to be completely sequential, only ever having a single frontier location, and thus maintaining the property that $\calS$ is a SASS. Additionally, the growth of the $\planter$ extends $5$ columns beyond the east side of its associated $\decrementer$ before backtracking and growing the top row, which ensures that the tiles that block and cooperate to place the green and yellow tiles, respectively, are in place before the green column grows downward. Thus, clearly, $\calS$ is directed, and since at every point during its growth there is exactly one single frontier location, $\calS$ is a SASS.



We prove Theorem~\ref{thm:imposs} by contradiction, and therefore assume that some temperature-1 GTAM system $\calG$ correctly simulates $\calS$, and let $m$ be the scale factor of that simulation. Without loss of generality, we will assume that the tiles of $\calG$ contain no glues to which no other tile can bind, as any such glues could be removed without changing the behavior of $\calG$. Now, we note that the height of each successive $\decrementer$, and therefore the heights of the associated green columns, increases infinitely. However, $\calG$ must have a tile set with a finite number of glue types. Let $g$ be the number of glue types in $\calG$ and for our proof we'll let $n = ((g+1)^{6m}\cdot(6m)!+1)\cdot3+2$ (we'll discuss the selection of this value later) 
and we'll focus on the growth of an assembly produced by $\calG$ up to and including the first tile placement of $\iteration$ $n$ in which the red tile of $\iteration$ n of $\calS$ is represented (depicted in Figure~\ref{fig:imposs-zoom} as $b$). Let $\alpha$ be the assembly created at this point, and note that by the assumption that $\calG$ simulates $\calS$, such an $\alpha$ must be producible in $\calG$. Furthermore, we will define $\vec{\alpha} = (\alpha_i$ $|$ $0 \le i < k)$ to be the assembly sequence which produces $\alpha$ and give it the following restriction: whenever there are multiple frontier locations in any $\alpha_i$ (since $\calG$ itself need not be a SASS even though $\calS$ is), the next location to receive a tile will be selected from those with the lowest $y$-coordinate. This would clearly be a valid assembly sequence, and since $\calG$ is simulating $\calS$, which is a SASS and grows each portion of each $\iteration$ in the previously specified order, it must be the case that no locations further than the single supertile boundary of fuzz outside of the first $n$ $\iteration$s can receive a tile before the $n$th $\iteration$ completes. Additionally, selection of the assembly sequence ensures that all tiles which share a path in the binding graph (i.e. a path through a series of bound glues) that includes a tile in the $\planter$ without going through the full $\decrementer$ and green column (i.e. those which may have grown upward as fuzz from the $\planter$), will be placed before the second row of the $\decrementer$ completes growth.

The tile placement which causes the red tile of $\iteration$ $n$ to be represented (shown as $b$ in Figure~\ref{fig:imposs-zoom}) must be placed only after representations of the entire $\decrementer$, the green column, and the yellow tiles have grown. This means that it must have a path in the binding graph which connects upward through the supertiles representing the green column, since the assembly sequence $\vec{\alpha}$ was chosen to place any tiles which didn't have such a path before the $\decrementer$ completes, and if a tile placement which caused the red tile to be represented were placed before growth of the green column, then $\calG$ would not correctly simulate $\calS$ (which can only place the red tile after the $\decrementer$ and green column complete). We now know that there must be a path of bound glues connecting the tile at location $b$ to some tile within the green column (which we label as $a$ in Figure~\ref{fig:imposs-zoom}).

We will now consider cuts which separate the green column in $\alpha$ into two pieces and cross between macrotiles (of width $m$) representing a green tile, as well as possibly one macrotile of allowable fuzz on each side, for a maximum width of $3m$. The count of the maximum possible number of glue positions along that cut from tiles both above and below it is then $6m$. With $g$ glue types and the empty glue, there are $(g+1)^{6m}$ different ways to fill those positions. The total possible number of orderings of placements at each of those $6m$ positions is $(6m)!$, and therefore $(g+1)^{6m}\cdot(6m)! + 1$ is greater than the total possible variety of glues and orderings of their placements along the cut. The quantity $n$ represents the number of the final iteration of $\calS$ which is being represented in $\calG$, and thus the height of the column of green tiles in $\calS$, and macrotiles representing those tiles in $\calG$, will be $n+4$. By setting $n = ((g+1)^{6m}\cdot(6m)! + 1)\cdot 3 + 2$, we can ignore the top and bottom three green macrotiles and still guarantee that even if we only inspect the cuts between every third pair of macrotiles, that at least two of those cuts have the same window movie (i.e. they had the exact same glues placed in the exact same order). Now, using those two cuts we define the windows $w$ and $w'$ (exemplified in Figure~\ref{fig:imposs-windows}) such that they each have one of those identical cuts as their northern horizontal cut, their western edge travels down between any allowable fuzz macrotiles (and there must be at least one completely empty macrotile space between the $\decrementer$ and the green column or the fuzz boundary would be broken), and then back to the east in a perhaps jagged path which avoids crossing any bound glues but separates the portion of $\alpha$ attached to the green column from the $\planter$. This type of cut must be possible because, as previously mentioned, the assembly sequence $\vec{\alpha}$ was selected to first place all tiles which could be bound by glues through a path directly to the $\planter$ (avoiding the green column) before the green column forms, and since $\calG$ is temperature-1, if tiles could bind after growing down from the green column, they could also bind before it forms. Thus, the boundary of the assembly on the north side of the $\planter$ which forms before the green column can be followed for this cut, to finish the windows.

Note that $w'$ is not a full translation of the window $w$ by some vector $\vec{c}$, but instead only the upper horizontal cut which contains all locations which could have a glue bond (all other locations along the window cut were chosen so that no glue bonds were formed across them). By Lemma 3.3 (the Window Movie Lemma) of \cite{IUNeedsCoop}\footnote{We'll refer directly to the arXiv version here \url{https://arxiv.org/pdf/1304.1679.pdf} for convenience}, and more specifically Corollary 3.4 which refers only to the bond-forming submovies, it must be the case that the assembly depicted on the right of Figure~\ref{fig:imposs-windows} must be able to form. Essentially, the existence of two cuts across the green column where identical series of glues are placed along those cuts allows the segment in between those cuts to be ``pumped'', potentially either upward (increasing the number of occurrences of the subassembly in between) or downward (decreasing them). Here we choose to pump down and remove the intermediate subassembly, showing that the path of tiles which grows to the right from the green column must be able to also grow at a higher location.  Since, as we have shown, this path must extend greater than the width of a single supertile beyond the green column (i.e. through the width of the yellow macrotile and into the red), it is beyond the allowable fuzz region when translated upward, and therefore $\calG$ does not correctly simulate $\calS$. This is a contradiction that $\calG$ simulates $\calS$, and therefore Theorem~\ref{thm:imposs} is proven. $\qed$

\end{proof}

\subsection{Proof of Theorem~\ref{thm:imposs-general}}

\begin{proof}
The proof of Theorem~\ref{thm:imposs-general} follows almost immediately by using an identical proof structure to the proof of Theorem~\ref{thm:imposs}. Let $\calS$ be the same as $\calS$ defined in that proof. Again, our proof will be by contradiction, and we will therefore assume that $\calW$ is a weakly-cooperative tile assembly system which utilizes geometric hindrance and simulates $\calS$. By the definition of simulation, all of the same fundamental restrictions that held for $\calG$ of the former proof also hold for $\calW$, regardless of the model in which it is contained. There must be a bound on the number of glues $g$ (and tile types), and a regular grid of macrotiles bound to some scale factor $m$, and we can similarly compute a bound on the number of window movies which are possible across the boundaries of the macrotiles representing the green tiles of $\calS$. In some other models the tiles may be larger than single unit squares (e.g. polyominoes\cite{Polyominoes}), or they may consist of multiple shapes (e.g. squares and duples\cite{jDuples}), or they may be able to meet at differing angles (e.g. polygons\cite{Polygons}),  and in such systems the tiles binding across macrotile boundaries may not form straight lines along the macrotile boundaries, but instead may create jagged boundaries. However, even though this may increase the number of window movies possible, they are still finitely bounded. Therefore, we can once again find a pair of windows with identical bond-forming submovies, as well as a path in the binding graph from the green macrotiles to the red. Once again, we can use the Window Movie Lemma to pump down between the top cuts of the windows, and $\calW$ must grow that path at a higher location, therefore breaking the boundary allowed for fuzz. This contradicts the assumption that $\calW$ simulates $\calS$, and therefore Theorem~\ref{thm:imposs-general} is proven. $\qed$
\end{proof}

\end{document}